
\RequirePackage[l2tabu, orthodox]{nag}

\documentclass[prx,aps,superscriptaddress,twocolumn,nopacs,letter,nofootinbib,longbibliography,notitlepage]{revtex4-1}

\usepackage{amsmath,amsfonts,amsthm,xcolor,bbm,hyperref,mathtools,ragged2e,stmaryrd,soul,hhline}
\usepackage[T1]{fontenc}
\usepackage[capitalize, noabbrev]{cleveref}

\usepackage{tikz,pgfplots}
\usetikzlibrary{arrows,calc,decorations.pathmorphing,external,snakes}
\tikzexternalize[prefix=figures/]



\DeclareMathOperator{\Tr}{Tr}

\newcounter{thm}
\newtheorem{theorem}[thm]{Theorem}

\newcounter{lem}
\newtheorem{lemma}[lem]{Lemma}

\newcounter{def}
\theoremstyle{definition}
\newtheorem{definition}[def]{Definition}

\definecolor{darkblue}{RGB}{0,0,128} 
\definecolor{darkgreen}{RGB}{0,150,0}
\hypersetup{breaklinks, colorlinks, linkcolor=blue, citecolor=darkgreen, filecolor=red, urlcolor=darkblue}

\newcommand{\comm}[2]{\left\llbracket#1,#2\right\rrbracket}

\usepackage{amsmath}
\DeclareMathOperator*{\argmax}{arg\,max}
\DeclareMathOperator*{\argmin}{arg\,min}

\begin{document}
\title{Statistical mechanical models for quantum codes with correlated noise}
\author{Christopher T.\ Chubb}
\affiliation{Centre for Engineered Quantum Systems, School of Physics,
	University of Sydney, Sydney NSW, Australia}
\author{Steven T.\ Flammia}
\affiliation{Centre for Engineered Quantum Systems, School of Physics,
	University of Sydney, Sydney NSW, Australia}
\affiliation{Yale Quantum Institute, Yale University, New Haven, Connecticut 06520, USA}
\date{\today}

\begin{abstract}
We give a broad generalisation of the mapping, originally due to Dennis, Kitaev, Landahl and Preskill, from quantum error correcting codes to statistical mechanical models. 
We show how the mapping can be extended to arbitrary stabiliser or subsystem codes subject to correlated Pauli noise models, including models of fault tolerance. 
This mapping connects the error correction threshold of the quantum code to a phase transition in the statistical mechanical model.
Thus, any existing method for finding phase transitions, such as Monte Carlo simulations, can be applied to approximate the threshold of any such code, without having to perform optimal decoding.
By way of example, we numerically study the threshold of the surface code under mildly correlated bit-flip noise, showing that noise with bunching correlations causes the threshold to drop to $p_{\textrm{corr}}=10.04(6)\%$, from its known iid value of $p_{\text{iid}}=10.917(3)\%$.
Complementing this, we show that the mapping also allows us to utilise any algorithm which can calculate/approximate partition functions of classical statistical mechanical models to perform optimal/approximately optimal decoding. 
Specifically, for 2D codes subject to locally correlated noise, we give a linear-time tensor network-based algorithm for approximate optimal decoding which extends the MPS decoder of Bravyi, Suchara and Vargo. 
\end{abstract}
\maketitle


Quantum mechanical systems are inherently sensitive to noise. 
The inability to completely suppress environmental noise and perform noiseless quantum operations therefore provides a significant barrier to scalable quantum information processing. 
To mitigate this, quantum error correcting codes~\cite{Shor1995,Steane1996} were developed that encode quantum information into a larger system whose redundant degrees of freedom provide protection from physical noise. 
For a given code family and noise process, quantum information can be encoded and decoded arbitrarily well below a critical noise strength known as the \emph{threshold}. Whilst the threshold is defined with respect to the optimal decoder, it is also often also studied for the case of specific sub-optimal decoding procedures.

The most commonly studied model of noise in quantum codes is that of \emph{iid Pauli noise}, in which each qubit is subjected to an independent, identical Pauli noise process.
This is a mathematically convenient model, but it does not account for any possible correlations between errors. Whilst this model can provide a proof-of-principle that an error correction procedure can successfully withstand errors, many of the physical architectures in which we might hope to implement quantum error correction are known to experience correlated noise. 
Examples include proximity effects~\cite{Fowler2014,Novais2017} and bosonic couplings~\cite{Brown2016a,Alicki2009,Chesi2010,Novais2013,Jouzdani2013,Freeman2014,Novais2017,Lopez-Delgado2017} in solid state systems. 
Correlations can also arise when modelling non-Markovian noise processes~\cite{Ng2009,McCutcheon2014}. 
As the threshold depends on the error model, including the presence and magnitude of correlations, taking these factors into account is important when attempting to specify physically relevant thresholds.

One of the most important correlated noise models is that of \emph{circuit-based noise}~\cite{Raussendorf2006,Fowler2009,Nickerson2013,Tomita2014,Fowler2012,Brown2016b}, in which elementary gates and measurement are taken to be noisy.
Even if we assume the noise introduced in each operation is independent, the operations themselves tend to propagate and accumulate errors, giving an overall correlated noise model. 
Importantly, there exist codes which have a threshold under this correlated noise model, which is known as the \emph{fault-tolerant threshold}~\cite{Kitaev1997,Knill1996}. 
Moreover, it can even be shown that there exist error correction procedures which allow not only for fault tolerant storage of quantum information, but also fault tolerant quantum computation, a result known as the \emph{quantum threshold theorem}~\cite{Knill1998a,Aharonov1999,Aharonov2006,Preskill2013}.

In the context of correlated noise, little is known about optimal decoding or fault tolerance procedures. For this reason, most Monte Carlo estimates of thresholds for correlated noise are given with respect to sub-optimal decoders and fault tolerance schemes~\cite{Baireuther2018,Krastanov2017,Wootton2012,Criger2017,Varsamopoulos2018,Delfosse2014,Hutter2014,Duclos-Cianci2010,Nickerson2017,Maskara2018,Darmawan2018}. Indeed, to our knowledge, no optimal thresholds are known for any interesting quantum code families with a non-trivially correlated noise models. 

Remarkably, there exists a method for computing code thresholds with respect to iid Pauli noise called the \emph{statistical mechanical mapping}~\cite{Dennis2001}. 
In this technique, a classical statistical mechanical system is constructed from a quantum code with the noise model manifesting as a quenched disorder. 
This mapping is designed in such a way that the thermodynamic properties of the statistical mechanical system relate to the error correction properties of the quantum code, under the optimal decoder.
This method has been used~\cite{Fujii2013} to compute thresholds~\cite{Bombin2012,Kovalev2013,Katzgraber2009,Kovalev2013, Katzgraber2009, Bombin2012, Kubica2017,Kovalev2018}, including fault-tolerance thresholds~\cite{Wang2003,Andrist2011a,Dumer2015,Bombin2013,Andrist2012, Ohno2004}, for a wide variety of code families based on topological codes~\cite{Bombin2013}, i.e.\ codes with spatially local stabiliser generators on a lattice in Euclidean space. 

Importantly, this link implies that the error correction threshold in the code manifests as a phase transition in the resulting stat mech model. 
This insight implies that one may bring to bear the various numerical and analytical techniques for determining the phases of stat mech systems to indirectly estimate the threshold of our code, \emph{without} having to implement optimal decoding. 

\subsection*{Summary of main results}

In this manuscript we give a broad generalisation of the stat mech mapping to the case of correlated Pauli noise models acting on any stabiliser or subsystem quantum code. 
The original mapping for independent bit-flip noise~\cite{Dennis2001}, and subsequent generalisation for independent Pauli noise~\cite{Bombin2012}, works by showing that when a special condition known as the Nishimori condition~\cite{Nishimori1981} holds, a certain stat mech model with quenched disorder has a partition function that maps directly onto the probability of a logical class given the syndrome of a code. 
Our first result is that this fundamental theorem continues to hold in much more generality for \emph{correlated} noise using our more general stat mech mapping. 

The specific notion of correlation allowed by our theorem is very general. 
We first show that any distribution arising from a \textit{factor graph} admits such a mapping, which generalises independent noise as follows. 
In particular, it works whenever any cluster of errors that is sufficiently far apart is \emph{conditionally} independent given the neighbouring spins of the clusters. 
This is the well-known spatial Markov condition, and it provides a systematic relaxation of the notion of independence, controlled by the length scale at which disjoint clusters become conditionally independent. 
When this length scale is zero, there are no intermediate spins on which to condition, and we recover the case of strictly independent noise. 
We make this notion precise in \cref{def:factored} below. 

We further generalise this to the case of spatio-temporally correlated noise by mapping to a system of one higher spatial dimension. 
This allows us to include the most relevant type of noise for quantum computing, namely circuit-based noise thresholds.

By performing Monte Carlo simulations of the resulting stat mech system, we then use this correspondence to approximate the threshold of the toric code subject to a correlated model of bit-flip errors, and quantify how a certain family of positively correlated errors affect the threshold. 

Finally we will show how this mapping can also be used to give an efficient approximation of the maximum likelihood decoder.
A consequence of this result is a generalisation of the tensor network decoder of Ref.~\cite{Bravyi2014} to any 2D surface code with spatially local noise correlations. 
The tensor network that yields the decoder can be approximately contracted in linear time in $n$, the number of qubits. 
The contraction sequence allows a systematic approximation of the maximum likelihood decoder by increasing the bond dimension cutoff of an intermediate matrix product state representation. 

The paper is organised as follows. 
After introducing mathematical preliminaries and definitions in \cref{sec:prelims}, we continue in \cref{sec:mapping} by reviewing the stat mech mapping in the case of independent noise.
In \cref{sec:correlated} we then extend the mapping to account for noise models with spatial correlations. 
In \cref{sec:numerics} we apply this mapping to correlated bit-flip noise in the toric code, and perform Monte Carlo simulations to estimate the threshold. 
In \cref{sec:circuit} we extend our construction to spatio-temporal correlations and circuit-based noise. 
In \cref{sec:decoding} we prove that the phase boundary and the threshold coincide, and using this we show in \cref{sec:mld} how tensor network methods yield efficient approximations of the optimal decoder.
We conclude in \cref{sec:conclusions}, provide some background on correlated noise in \cref{app:graphical}, and details of our numerical simulations in \cref{app:simulation}.


\section{Preliminaries}
\label{sec:prelims}

Fix a local dimension $d\in\mathbb N$. Let $\omega$ denote the fundamental $d$th root of unity, and let $\mathcal{P}$ denote the set of Pauli operators \emph{modulo phase}. We will associate a copy of the Paulis with each site $i$ for $1\leq i\leq n$, denoting the Paulis supported on site $i$ by $\mathcal P_i$, and the set of global Paulis by $\mathcal{P}^n:=\bigtimes_{i=1}^n\mathcal{P}_i$.

Pauli operators always commute up to phase. As such, we can capture the non-commutativity of the Paulis via the \emph{scalar commutator} $\comm{\cdot}{\cdot}:\mathcal P\times \mathcal P\to\mathbb C$, defined by the relation $AB=\comm{A}{B}BA$. It can also be seen as the normalised trace of the group commutator, 
\begin{align}
\comm{A}{B}:=\frac{1}{d}\Tr\left[A,B\right],
\end{align}
where $\left[A,B\right]:=ABA^{-1}B^{-1}$. We can see that the scalar commutator is a well-defined function on $\mathcal P\times \mathcal P$, as the group commutator is invariant under phases, $\left[e^{i\theta}A,e^{i\phi}B\right]=\left[A,B\right]$. In general the group commutator is not multiplicative\footnote{For a general group, $\left[A,BC\right]=\left[A,B\right]B\left[A,C\right]B^{-1}$. }, but conveniently the scalar commutator of Paulis \emph{is}, in the sense that \mbox{$\comm{A}{BC}=\comm{A}{B}\cdot\comm{A}{C}$}. This also implies that Paulis commute under the scalar commutator, in the sense that $\comm{A}{BC}=\comm{A}{CB}$, which will be important for considering subsystem codes.

\subsection{Stabiliser and Subsystem codes}

The codes we will be considering in this work are Pauli stabiliser codes~\cite{Gottesman1996,Gottesman1997}. A stabiliser code is defined by a subgroup $\mathcal S$ of the Paulis acting on $n$ qu$d$its. This subgroup must be Abelian and have trivial overlap with the centre, $\mathcal S\cap \langle\omega I\rangle=\lbrace I\rbrace$. This group is typically specified in the form of a set of generators $\lbrace S_k\rbrace_k$. The associated code space is given by all states which are stabilised by every element of the stabiliser group, i.e.\ all $|\psi\rangle$ such that $s|\psi\rangle=|\psi\rangle$ for all $s\in\mathcal{S}$.

The logical (Pauli) operators of such a code are given by the centraliser of the stabiliser group, that is the set of all $L\in \mathcal P^n$ such that $\comm{s}{L}=+1$ for all stabilisers $s\in\mathcal{S}$. As the stabilisers act trivially on the code space, logicals which differ just by a stabiliser have identical action on the code space, and we will refer to them as \emph{logically equivalent}.

Suppose that we start in a code state $|\psi\rangle$, and our system suffers an error $E\in \mathcal P^n$, leaving it in state $E|\psi\rangle$. The first step in the error correction procedure begins by measuring each of the stabiliser generators $\lbrace S_k\rbrace_k$. The Pauli error model has the key feature that any given error yields a deterministic outcome. Specifically, measuring the $k$th stabiliser generator deterministically gives the outcome $\comm{S_k}{E}$, and therefore does not disturb the state. Importantly, this outcome does not depend on the initial code state $|\psi\rangle$, depending only on the error $E$. We refer to the collection of all such measurement outcomes as the \emph{syndrome}.

A decoder for such a code is an algorithm which takes as input the syndrome, and outputs a decoding Pauli $D$. We then apply $D^{-1}$ to the state, in the hope that this corrects the error. This decoding successfully restores the system back to the original code state if and only if $D$ was logically equivalent to the true error $E$, i.e.\ $D^{-1}E\in \mathcal{S}$.

A more general notion of quantum code is that of a subsystem code~\cite{Kribs2005}. 
Here, some fraction of the logical qubits are sacrificed to become so-called gauge qubits.
Gauge qubits then provide a workspace that simplifies some of the measurements, since instead of measuring a stabiliser directly one can measure combined gauge and stabiliser operators that might have lower weight. 
In a subsystem code, the gauge group is generated by all of the stabiliser and the gauge operators, and the stabiliser group is the centre of the gauge group (modulo phase). 
Thus, measuring enough gauge generators to reconstruct the stabilisers is sufficient to perform quantum error correction in a subsystem code. 
This is true even though general elements of the gauge group don't commute, since the stabiliser elements \emph{do} commute by virtue of living in the centre. 
We refer the reader to Ref.~\cite{Kribs2005} for a more detailed discussion of subsystem codes. 

In what follows, we never use the fact that the stabiliser generators commute. 
We only use the fact that Pauli errors form an abelian error algebra, since they commute modulo the scalar commutator. 
That is, given two Pauli operators $P$ and $Q$, the accumulated error on a state is the same if one applies $PQ$ or if one applies $QP$. 
Because of this fact, everything that we derive below also applies to subsystem codes where gauge generators are used in place of stabiliser generators. 
With the exception of \cref{sec:circuit}, we will use the simpler language of stabiliser codes throughout, but with the understanding that the results can easily be generalised to subsystem codes.


\section{The statistical mechanical mapping}
\label{sec:mapping}

Before considering correlated noise, we start by reviewing the case of independent noise, as first considered in Ref.~\cite{Dennis2001}.
Although this material is review, our goal is to write the derivation of the independent case in such a way that the correlated case falls out as naturally as possible.

For notational convenience, we will restrict out attention to stabiliser codes for this section, but we note that the below construction can be naturally extended to subsystem codes by replacing stabiliser generators by gauge generators.

Consider a code given by a set of stabiliser generators $\lbrace S_k\rbrace_{k}$. 
In this section we will be considering an independent Pauli noise model: let $\lbrace p_i\rbrace_{i}$ be probability distributions on $\mathcal{P}_i$ which describe the probability of a Pauli error (independently) occurring at each site $i$. 
The probability of an overall error $E\in\mathcal{P}^n$ is therefore given by
\begin{align}
	\Pr(E)=\prod_{i=1}^{n}p_i(E_i),
\end{align} 
where $E_i$ denotes the action of $E$ within $\mathcal{P}_i$.

\subsection{Statistical mechanical model}

We now want to develop a (classical) spin model whose statistical mechanical properties capture the error correction properties of our quantum code, in such a way that that the threshold in the latter naturally corresponds to a phase transition in the former. 

The state space of this model will correspond to the stabiliser group, with the noise model determining the interactions. Specifically, associate a classical spin degree of freedom $c_k$ with each stabiliser generator $S_k$. We will consider each degree of freedom as a member of $\mathbb{Z}_d$, and so our full state space\footnote{
	In the case where $d$ is composite, and some of the generators $S_k$ are not of maximal order, this will actually cause an over-counting in \cref{thm:fundie}. Each degree of freedom should take on a number of states equal to the order of $S_k$. For notational convenience we will assume all of our generators $S_k$ are of maximal order, as is always the case for prime $d$.
} is given by $\times_k\mathbb{Z}_d$.

The family of Hamiltonians we will consider is defined as follows.

\begin{definition}[Stat mech Hamiltonian: \emph{independent noise}]
	\label{defn:ham}
	For a Pauli $E\in \mathcal{P}^{\times n}$, and coupling strengths $\lbrace J_i:\mathcal{P}_i\to \mathbb{R}
	\rbrace_i$, our Hamiltonian $H_E$ is defined as
	\begin{align}
	\label{eq:definitionStatMechHam}
	H_E(\vec c)
	=-\sum_{i,\sigma\in \mathcal{P}_i}J_i(\sigma)\comm{\sigma}{E}\prod_k\comm{\sigma}{S_k}^{c_k}\!,
	\end{align}
	for any state $\vec{c}\in\times_k\mathbb{Z}_d$, where $E$ forms a (quenched) disorder parameter.
	Here the sum is over all qudits $i$ and all elements $\sigma$ in the local Pauli group $\mathcal{P}_i$.
\end{definition}
For algebraic manipulation, it will be convenient to note that the above Hamiltonian can also be written as
\begin{align}
H_E(\vec c)
=-\sum_{i,\sigma\in \mathcal{P}_i}J_i(\sigma)\comm{\sigma}{E\prod_kS_k^{c_k}}\!,
\end{align}
due to the multiplicativity of the scalar commutator. 

Physically, these Hamiltonians correspond to random bond Ising-type models for $d=2$. This can be seen by putting the Hamiltonian in terms of the degrees of freedom $s_k:=(-1)^{c_k}$,
\begin{align}
\label{eq:directisingtype}
H_E(\vec s)
=-\sum_{i,\sigma\in \mathcal{P}_i}\overbrace{J_i(\sigma)}^{\text{Strength}}\overbrace{\comm{\sigma}{E}}^{\text{Disorder}}\overbrace{\prod_{k:\comm{\sigma}{S_k}=-1}s_k}^{\text{Interactions}}.
\end{align}
Similarly for $d>2$ this can be seen as a type of random-bond vector Potts model (or clock model).

Readers familiar with the prior work on stat mech Hamiltonians (beginning with~\cite{Dennis2001}) might be puzzled as to why we write such a convoluted form of $H_E$ for our definition rather than the straightforward Ising-type model in \cref{eq:directisingtype}. 
The answer is that, as we will see below, this formulation gives the simplest path to generalising these results to the case of correlated noise.

We note that if our stabiliser generators are local, in the sense that each site is only acted upon non-trivially by a finite number of stabiliser generators, then this Hamiltonian is also local, in the sense that each interaction only touches a finite number of sites. 

In this construction, each of our interactions corresponded to a site $i$ and a $\sigma\in\mathcal{P}_i$, i.e.\ a single-site Pauli. As we will see in \cref{sec:correlated}, this is intimately linked with the restriction to independent noise, and that including larger range interactions will allow us to account for correlated error models.

Before considering specific coupling strengths, we note that this model has been constructed such that it is symmetric under multiplying the disorder $E$ by a stabiliser generator $S_k$ and adding to the corresponding degree of freedom $c_k$. Specifically,
\begin{align}
	H_{ES_k}(\vec c)=H_E(\vec c+\hat k).
\end{align}
This can be seen by noting that our Hamiltonian only depends on each $c_k$ via $ES_k^{c_k}$, which itself has this symmetry.

\subsection{Nishimori condition}

We now want to consider the coupling strengths required such that the above model reproduces the statistical properties of our quantum code and noise model. 
Specifically we want to choose these couplings such that
\begin{align}
	Z_E=\Pr(\overline E),
\end{align}
where $Z_E$ is the partition function of the Hamiltonian $H_E$ with quenched disorder $E$, $\Pr(\overline E)$ denotes the probability of a logical class of errors, and $\overline E$ denotes the set of errors which are logically equivalent to $E$. 
A sufficient condition for this is given by the \emph{Nishimori conditions}.

\begin{definition}[Nishimori conditions: \emph{independent noise}]
	\label{defn:nishimori}
	An inverse temperature $\beta$ and coupling strengths $\lbrace J_i\rbrace_i$ satisfy the \emph{Nishmori conditions} with respect to distributions $\lbrace p_i\rbrace_i$ if
	\begin{align}
	\label{eq:IndependentNishimori}
	\beta J_i(\sigma)=\frac{1}{|\mathcal P|}\sum_{\tau\in \mathcal{P}_i}\log p_i(\tau)\comm{\sigma}{\tau^{-1}},
	\end{align}
	holds for all $i$ and $\sigma\in \mathcal P_i$.
\end{definition}

Note that \cref{eq:IndependentNishimori} is not defined if any of the $p_i(\tau)$ are exactly zero. 
However, in those cases, we can formally manipulate $\log 0$ together with the convention that $\exp(\log 0) = 0$ and obtain sensible answers. 
A more rigorous treatment taking limits is certainly possible, but would obscure the thrust of the argument, so we neglect these details. 

This now allows us to prove the critical property of our stat mech mapping, which we refer to as the \emph{fundamental theorem}.

\begin{theorem}[Fundamental theorem of stat mech mapping: \emph{independent noise}]
	\label{thm:fundie}
	Given the stat mech Hamiltonian \emph{(}\cref{eq:definitionStatMechHam}\emph{)} associated to a noise model that satisfies the Nishimori condition \emph{(}\cref{eq:IndependentNishimori}\emph{)}, the probability of a logical class of errors is equal to the corresponding partition function,
	\begin{align}
	Z_E=\Pr(\bar E).
	\end{align}
\end{theorem}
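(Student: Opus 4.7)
The plan is to rewrite both sides of $Z_E=\Pr(\bar E)$ as sums over the logical class $\bar E$ with matching summands. Since $\{S_k\}_k$ generates the stabiliser group, setting $F_{\vec c}:=E\prod_k S_k^{c_k}$ the family $\{F_{\vec c}\}_{\vec c}$ enumerates $\bar E$ (with any multiplicity handled as in the footnote to \cref{defn:ham}). By independence of the noise,
\begin{equation}
\Pr(\bar E)=\sum_{\vec c}\prod_i p_i\bigl((F_{\vec c})_i\bigr).
\end{equation}
Meanwhile, using the alternative form of the Hamiltonian recorded just after \cref{defn:ham},
\begin{equation}
Z_E=\sum_{\vec c}\exp\!\Bigl[\beta\sum_i\sum_{\sigma\in\mathcal P_i} J_i(\sigma)\comm{\sigma}{F_{\vec c}}\Bigr].
\end{equation}
Any $\sigma\in\mathcal P_i$ is supported only on site $i$, so $\comm{\sigma}{F_{\vec c}}=\comm{\sigma}{(F_{\vec c})_i}$ and the Boltzmann weight factorises over sites as $\prod_i\exp[\beta\sum_{\sigma}J_i(\sigma)\comm{\sigma}{(F_{\vec c})_i}]$. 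Matching the two $\vec c$-sums term-by-term therefore reduces the theorem to the single-site identity
\begin{equation}
\label{eq:sketch-sitewise}
\exp\!\Bigl[\beta\sum_{\sigma\in\mathcal P_i} J_i(\sigma)\comm{\sigma}{\pi}\Bigr]=p_i(\pi)\qquad\forall\,\pi\in\mathcal P_i.
\end{equation}

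To establish \eqref{eq:sketch-sitewise} I would take logarithms, substitute the Nishimori condition \cref{eq:IndependentNishimori}, interchange the order of summation, and use multiplicativity of the scalar commutator ($\comm{\sigma}{\tau^{-1}}\comm{\sigma}{\pi}=\comm{\sigma}{\tau^{-1}\pi}$) to rewrite the left-hand side as
\begin{equation}
\frac{1}{|\mathcal P|}\sum_{\tau\in\mathcal P_i}\log p_i(\tau)\sum_{\sigma\in\mathcal P_i}\comm{\sigma}{\tau^{-1}\pi}.
\end{equation}
The crux is then the character-orthogonality relation
\begin{equation}
\sum_{\sigma\in\mathcal P_i}\comm{\sigma}{g}=|\mathcal P|\,\delta_{g,I},
\end{equation}
with equality taken modulo the central phase subgroup $\langle\omega I\rangle$. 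This holds because $\sigma\mapsto\comm{\sigma}{g}$ is a character of the abelian quotient $\mathcal P_i/\langle\omega I\rangle\cong\mathbb Z_d^2$, trivial precisely when $g$ is central; non-trivial characters sum to zero by orthogonality. Applying this with $g=\tau^{-1}\pi$ collapses the $\tau$-sum to $\tau=\pi$, leaving $\log p_i(\pi)$ as desired.

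The main obstacle is recognising that the Nishimori condition \cref{eq:IndependentNishimori} is essentially the character-theoretic (Fourier) dual of $\log p_i$ on the abelian group $\mathcal P_i/\langle\omega I\rangle$, so that inverting it recovers $p_i$; once this perspective is adopted, the remainder is routine algebraic bookkeeping. A minor caveat, already flagged in the excerpt, is the handling of zero probabilities, which can be made rigorous via a limiting argument but does not affect the structure of the proof.
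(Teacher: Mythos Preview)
Your proof is correct and follows essentially the same route as the paper: both identify the Nishimori condition as a Fourier/character transform of $\log p_i$ over $\mathcal P_i$, invoke the orthogonality relation $\sum_{\sigma}\comm{\sigma}{\rho}=|\mathcal P|\,\delta_{\rho,I}$ to recover the single-site identity \eqref{eq:sketch-sitewise}, and then sum the resulting Gibbs weights over $\vec c$ (equivalently over the stabiliser group) to obtain $\Pr(\bar E)$. The only cosmetic difference is that the paper first establishes $e^{-\beta H_E(\vec 0)}=\Pr(E)$ and then appeals to the symmetry $H_{ES_k}(\vec c)=H_E(\vec c+\hat k)$, whereas you work directly with $F_{\vec c}=E\prod_k S_k^{c_k}$ and match the $\vec c$-sums term by term; these are the same argument in slightly different packaging.
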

\begin{proof}
	First we note that the Nishimori condition takes the form of an (inverse) Fourier transform of the log-probabilities $\log p_i$ with respect to the Pauli group. Using the orthogonality condition $\sum_{\sigma}\comm{\sigma}{\rho}=|\mathcal P|\delta_{\rho,I}$, we can see that this is equivalent to the requirement that the Fourier transform of $\beta J_i$ is $\log p_i$, specifically
	\begin{align}
		\sum_{\sigma\in\mathcal{P}_i}\beta J_i(\sigma)\comm{\sigma}{E}=\log p_i(E_i).
	\end{align}
	Summing over sites $i$, we see that this means the Gibbs weight in the all-zero state reproduces the error probability,
	\begin{subequations}
	\begin{align}
		e^{-\beta H_E(\vec 0)}
		&=\exp\left(\sum_{i,\sigma\in\mathcal{P}_i}\beta J_i(\sigma)\comm{\sigma}{E}\right)\\
		&=\exp\left(\sum_i \log p_i(E_i)\right)\\
		&=\prod_i p_i(E_i)
		=\Pr(E).
	\end{align}
	\end{subequations} 
	Lastly, we can use the symmetry property to write the partition function as a sum over logically equivalent disorders, giving the probability of logical class as required,
	\begin{align}
	Z_E
	&=\sum_{\vec c}e^{-\beta H_E(\vec c)}
	=\sum_{S\in\mathcal S}e^{-\beta H_{ES}(\vec 0)}
	=\Pr(\overline E).
	\end{align}
\end{proof}

This correspondence between logical class probabilities and partition functions suggests that the regimes in which error correction is and is not possible in our code correspond naturally to phases in our stat mech model. We will show this explicitly in \cref{sec:decoding}, and see that the error correcting threshold manifests as a quenched phase transition.

In \cref{sec:mld} we will also see how this correspondence, together with tensor network methods for approximating partition functions, can be used to construct a family of efficient tensor network algorithms which approximate maximum likelihood decoding.

\subsection{Example: Toric code}

\begin{figure}
	\centering
	\hspace{-1.5cm}
	\includegraphics{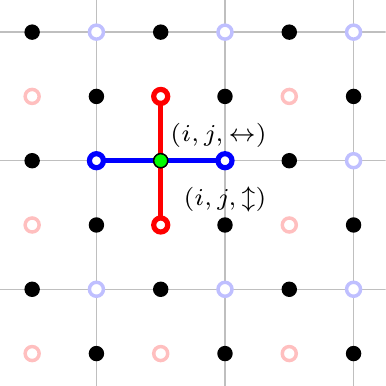}
	\hspace{-.375cm}
	\includegraphics{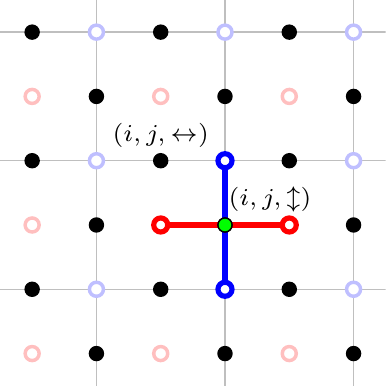}
	\hspace{-1cm}
	
	\caption{Stat mech mapping of toric code with iid noise. Solid circles indicate qubits, and hollow circles indicate the spins of the stat mech model. The lines connecting the spins to the highlighted green qubit indicate the spins which are coupled by the interactions corresponding to that qubit. The interaction terms corresponding to a qubit on a horizontal edge $(i,j,\leftrightarrow)$ give a horizontal Ising coupling on the $X$-sublattice, a vertical Ising coupling on the $Z$-sublattice, and a four-body term coupling the two sub-lattices. The vertical qubit $(i,j,\updownarrow)$ gives rotated versions of these couplings. The full Hamiltonian is given in \cref{eqn:iidtoric1,eqn:iidtoric2}.}
	\label{fig:iidtoric}
\end{figure}

We now consider several examples of the stat mech systems corresponding to the toric code with iid noise. 
Let $p$ be the marginal distribution of errors on a single qubit. 
Given that our stabilisers split into $X$-type stars and $Z$-type plaquettes, we will divide the spins into two corresponding sub-lattices, denoting them $\color{blue}\lbrace {s_k^X}\rbrace_k$, the $X$-sublattice, or $\color{red}\lbrace {s_k^Z}\rbrace_k$, the $Z$-sublattice, respectively. 
Applying our mapping to this code gives the Hamiltonian
\begin{align}
\label{eqn:iidtoric1}
H_E=-\sum_e\Bigl(J(I)
&+J(X)\comm{X}{E_e}{\color{red}\prod_{\partial f\ni e}s^Z_f} \\
&+J(Z)\comm{Z}{E_e}{\color{blue}\prod_{v\in \partial e}s^X_v} \notag\\
&+J(Y)\comm{Y}{E_e}{\color{red}\prod_{\partial f\ni e}s^Z_f} {\color{blue}\prod_{v\in\partial e}s^X_v} \notag
\Bigr),
\end{align}
where $v,e,f$ denote vertices, edges, and faces in the lattice, $\partial e$ denote the vertices surrounding an edge, and $\partial f$ the edges surrounding a face. 
We note that the above Hamiltonian is not only valid for the toric code, but in fact \emph{any} homology code.

Writing this Hamiltonian out more explicitly in Cartesian coordinates, as indicated in \cref{fig:iidtoric}, we see that our model corresponds to two copies of the 2D random bond Ising model on each sublattice, with a four-body coupling between them, specifically
\begin{align}
\label{eqn:iidtoric2}
H_E=-\sum_{i,j}\Bigl(2J(I)
\!
&+\!J(X)\comm{X}{E_{i,j,\leftrightarrow}} {\color{red}s_{i,j}^Zs_{i,j+1}^Z} \\[-.35cm]\notag
&+\!J(X)\comm{X}{E_{i,j,\updownarrow}} {\color{red}s_{i,j}^Zs_{i+1,j}^Z} \\\notag
&+\!J(Z)\comm{Z}{E_{i,j,\leftrightarrow}} {\color{blue}s_{i,j}^Xs_{i-1,j}^X} \\\notag
&+\!J(Z)\comm{Z}{E_{i,j,\updownarrow}}{\color{blue}s_{i,j}^Xs_{i,j-1}^X} \\\notag
&+\!J(Y)\comm{Y}{E_{i,j,\leftrightarrow}} {\color{red}s_{i,j}^Zs_{i,j+1}^Z} {\color{blue}s_{i,j}^Xs_{i-1,j}^X} \\\notag
&+\!J(Y)\comm{Y}{E_{i,j,\updownarrow}}\! {\color{red}s_{i,j}^Zs_{i+1,j}^Z} {\color{blue}s_{i,j}^Xs_{i,j-1}^X} \notag
\Bigr),
\end{align}
where $E_{i,j,\leftrightarrow}$ and $E_{i,j,\updownarrow}$ indicate the action of the error $E$ on the $(i,j)$th horizontal and vertical qubits (see \cref{fig:iidtoric}). The terms inside the parentheses are the interactions corresponding to the qubits labelled $(i,j,\leftrightarrow)$ and $(i,j,\updownarrow)$ as indicated in \cref{fig:iidtoric}.

The Nishimori conditions give coupling strengths of the form
\begin{subequations}
\begin{align}
	J(I)&=\frac{1}{4\beta}\log p(I)p(X)p(Y)p(Z),\\
	J(X)&=\frac{1}{4\beta}\log \frac{p(I)p(X)}{p(Y)p(Z)},\\
	J(Z)&=\frac{1}{4\beta}\log \frac{p(I)p(Z)}{p(X)p(Y)},\\
	J(Y)&=\frac{1}{4\beta}\log \frac{p(I)p(Y)}{p(X)p(Z)}.
\end{align}
\end{subequations}

We now consider this model for specific iid error models of interest.

\subsubsection{Depolarising noise (Bombin et.al.)}

If we consider the depolarising channel, with \mbox{$p(I)=1-p$} and $p(X)=p(Y)=p(Z)=p/3$, then our model reduces to 
\begin{align}
H_E=-\sum_{i,j}\Bigl(-K&+J\comm{X}{E_{i,j,\leftrightarrow}} {\color{red} s_{i,j}^Zs_{i,j+1}^Z}\\[-.35cm]\notag
&+J\comm{X}{E_{i,j,\updownarrow}} {\color{red}s_{i,j}^Zs_{i+1,j}^Z}\\\notag
&+J\comm{Z}{E_{i,j,\leftrightarrow}} {\color{blue}s_{i,j}^Xs_{i-1,j}^X} \\\notag
&+J\comm{Z}{E_{i,j,\updownarrow}} {\color{blue}s_{i,j}^Xs_{i,j-1}^X} \\\notag
&+J\comm{Y}{E_{i,j,\leftrightarrow}} {\color{red}s_{i,j}^Zs_{i,j+1}^Z} {\color{blue}s_{i,j}^Xs_{i-1,j}^X} \\\notag
&+J\comm{Y}{E_{i,j,\updownarrow}} {\color{red}s_{i,j}^Zs_{i+1,j}^Z} {\color{blue}s_{i,j}^Xs_{i,j-1}^X} \notag
\Bigr),
\end{align}
with Nishimori conditions
\begin{subequations}
\begin{align}
\beta K&=\frac{1}{2}\log \frac{27}{p^3(1-p)}\,,\\
\beta J&=\frac{1}{4}\log \frac{3(1-p)}{p}\,.
\end{align}
\end{subequations}
This corresponds to a disordered variant of the eight-vertex model~\cite{Sutherland1970,Fan1970,Baxter1971}, which was considered in Ref.~\cite{Bombin2012}.

\subsubsection{Independent X and Z}

Consider now a model with independent $X$ and $Z$ errors. Specifically let ${\color{blue}p_X},{\color{red}p_Z}$ denote the probability of each \emph{generator}, such that
\begin{subequations}
\begin{align}
	p(I)&={\color{blue}(1-p_X)}{\color{red}(1-p_Z)},\\
	p(X)&={\color{blue}p_X}{\color{red}(1-p_Z)},\\
	p(Z)&={\color{blue}(1-p_X)}{\color{red}p_Z},\\
	p(Y)&={\color{blue}p_X}{\color{red}p_Z}.
\end{align}
\end{subequations}
Importantly, this means that $p(X)p(Z)=p(I)p(Y)$, and therefore $J(Y)= 0$. This has the effect of decoupling the two sub-lattices into two non-interacting random-bond Ising models, such that the Hamiltonian can be decomposed $H_E={\color{blue}H_E^X}+{\color{red}H_E^Z}$, where
\begin{subequations}
\begin{align}
{\color{blue}H_E^X}=-\sum_{i,j}\Bigl(-{\color{blue}K^X}
&+{\color{blue}J^X}\comm{Z}{E_{i,j,\leftrightarrow}} {\color{blue}s_{i,j}^Xs_{i-1,j}^X} \\[-.4cm]\notag
&+{\color{blue}J^X}\comm{Z}{E_{i,j,\updownarrow}} {\color{blue}s_{i,j}^Xs_{i,j-1}^X}~ \notag
\Bigr),\\
{\color{red}H_E^Z}=-\sum_{i,j}\Bigl(-{\color{red}K^Z}
&+{\color{red}J^Z}\comm{X}{E_{i,j,\leftrightarrow}} {\color{red}s_{i,j}^Zs_{i,j+1}^Z} \\[-.4cm]\notag
&+{\color{red}J^Z}\comm{X}{E_{i,j,\updownarrow}} {\color{red}s_{i,j}^Zs_{i+1,j}^Z}~ \notag
\Bigr),
\end{align}
\end{subequations}
and
\begin{subequations}
\begin{align}
K^A&=\frac{1}{2\beta}\log \frac{1}{p_A(1-p_A)}\,,\\
J^A&=\frac{1}{2\beta}\log \frac{1-p_A}{p_A}\,,
\end{align}
\end{subequations}
for $A\in\lbrace X,Z\rbrace$. This corresponds to two decoupled copies of the random-bond Ising model, with disorder probabilities ${\color{blue}p_X}$ and ${\color{red}p_Z}$. The decoupled nature of these two models is generic for CSS codes (i.e., codes whose stabilisers split into separate $X$ and $Z$ type) under independent $X$ and $Z$ noise.
A further consequence of this is that the optimal decoder can decode the $X$ and $Z$ errors independently.

\subsubsection{Pure bit-flip noise (Dennis et.al.)}

As mentioned when we introduced the Nishimori condition above, care must be taken whenever the noise model does not have full support, as our Hamiltonian becomes divergent on certain states. An important example of such a model is that of iid bit-flip noise, as was considered in the seminal paper Ref.~\cite{Dennis2001}. Here we take \mbox{$p(I)=1-p$}, \mbox{$p(X)=p$}, and \mbox{$p(Y)=p(Z)=0$}.  

We know that the Nishimori condition implies
\begin{align}
	H_E(\vec{c})=-\frac{1}{\beta}\log \Pr\left(E\prod_k S_k^{c_k}\right).
\end{align}
Given that the probability of any error containing any non-trivial $Z$ contribution is zero in this error model, this tells us that any state which is \emph{not} entirely magnetised on the $Z$-sub-lattice (${\color{red} s_{i,j}^Zs_{i+1,j}^Z}=-1$ or ${\color{red}s_{i,j}^Zs_{i,j+1}^Z}=-1$ for some $i,j$) has infinite energy. We can interpret this as the degrees of freedom in this sub-lattice being \emph{frozen out}, into the one of the two entirely magnetised states. Restricting to these (degenerate) $Z$-magnetised states (${\color{red}s_{i,j}^Zs_{i+1,j}^Z}={\color{red}s_{i,j}^Zs_{i,j+1}^Z}=+1$ for all $i,j$), our Hamiltonian therefore reduces to a single copy of the random bond Ising model
\begin{align}
H_E=-\sum_{i,j}\Bigl(-2K&+J\comm{Z}{E_{i,j,\leftrightarrow}} {\color{blue}s_{i,j}^Xs_{i-1,j}^X} \\\notag
&+J\comm{Z}{E_{i,j,\updownarrow}} {\color{blue}s_{i,j}^Xs_{i,j-1}^X} \notag\Bigr),
\end{align}
where
\begin{subequations}
\begin{align}
	K&=\frac{1}{2\beta}\log\frac{1}{p(1-p)},\\
	J&=\frac{1}{2\beta}\log \frac{1-p}{p}.
\end{align}
\end{subequations}
We can see now that $J$ and $K$ are both finite for any $p\in(0,1)$, meaning that our Hamiltonian is no longer divergent on the remaining degrees of freedom. Moreover, this is the Hamiltonian that was considered in Ref.~\cite{Dennis2001}. This shows that, taking appropriate care, errors models without full support can also be considered using this construction, after considering frozen out degrees of freedom.

\subsection{Extensions}

The stat mech mapping for independent noise can also be generalised to several other noise models.

\paragraph{Noisy measurements} The first example is a noise model consisting of independent noise, and independently noisy measurements. This can be modelled by including an ancilla bit for each stabiliser, and replacing stabilisers $S\to S\otimes Z$, where the $Z$ acts on this ancilla. Any bit-flip noise on this ancilla bit will effectively model noisy measurements.

\paragraph{Leakage errors} Leakage can be accounted for by explicitly including flag bits.
The precise construction however will depend on what model of leakage is being used.

\paragraph{Overlapping independent} One simple toy model for spatially correlated errors is a model in which independent noise processes acts on overlapping regions, where the overall error is given by the product of these local errors. An example of this is the 
`nearest-neighbor depolarizing' model of Ref.~\cite{NN-depolarising}, in which each nearest-neighbour pair is afflicted uniformly by any non-trivial 2 qubit error with probability $1-p$. We discuss modifying the stat mech mapping for models such as this in \cref{app:overlapping}.


\section{Correlated noise}
\label{sec:correlated}

Now that we have reviewed the stat mech construction for \emph{independent} Pauli noise, we now want to consider extending this to \emph{correlated} noise models. 

We will consider a model with {local spatial correlations}, which forms a natural and systematic way of relaxing the independence condition above. Specifically, we want to consider models in which errors at sufficiently distant locations are conditionally independent, but in which they may be arbitrarily correlated at short range. The conditional dependences described above are naturally represented through \emph{probabilistic graphical models}, such as Markov random fields and Bayesian networks. Below we will focus on a simpler and more mathematically convenient model of \emph{factored distributions}. We discuss the relationship between these models in more detail in \cref{app:graphical}.

\subsection{Factored distributions}

For a global error $E\in\mathcal P^n$, let $E_i$ denote the action on a single site $i$. Similarly, for some set of sites \mbox{$R=\lbrace i_1,i_2,\dots\rbrace$}, let $E_R$ denote action of $E$ on sites in $R$.

For the previous mapping, we leveraged the fact that independence of two random variables $A$ and $B$ implies their joint distribution factors, $\Pr(A,B)=\Pr(A)\cdot\Pr(B)$. We now want to consider more general distributions which can also be locally factored.

\begin{definition}[Factored distribution]
\label{def:factored}
	A distribution \emph{factors} over sets $\lbrace R_j\rbrace_j$ if there exists non-negative functions \mbox{$\lbrace \phi_j:\mathcal{P}_{R_j}\to \mathbb{R}^{+}\rbrace_j$} such that
	\begin{align}
		\Pr(E)=\prod_j\phi_j(E_{R_{j}}).
	\end{align}
\end{definition}

If the regions $\lbrace R_j\rbrace_j$ are disjoint, then these just correspond to distributions which are independent after appropriate coarse-graining. 
We, however, \emph{do not} require that these regions are disjoint, which allows us to consider genuinely correlated noise models, which remain correlated even after coarse-graining.
In another extreme limit, if there is only a single region $R$ and it contains \emph{all} of the random variables, then we can take $\phi_R$ to be simply the complete joint probability distribution. 
This shows that the structure of the individual regions $R_j$ can interpolate between the case of independent noise (when $R_j = \{j\}$) and a general distribution (when $R = \{1,\ldots,n\}$).
Intermediate cases correspond to probability distributions with differing ranges of correlation. 
Thus the factored distribution formalism forms a natural ansatz for describing finite-range correlations efficiently in a probabilistic model. 
We refer the reader to \cref{app:graphical} for more discussion of factored distributions.

\subsection{Correlated statistical mechanical mapping}

We now extend the stat mech mapping to the case of correlated noise. 
Specifically we will consider an error model which factors over $\lbrace R_j\rbrace_j$, with factors $\lbrace\phi_j\rbrace_j$. 
In the mapping for independence noise, each interaction term corresponding to a single-site Pauli. 
As hinted at earlier, we can account for correlated noise models by including interaction terms corresponding to multi-site $\sigma$, specifically $\sigma$ living on a single region $R_j$. 
Recall that $S_k$ are the stabilisers of the quantum code and $c_k$ are the associated $d$-level classical degrees of freedom taking values in $\mathbb{Z}_d$.
Then the stat mech Hamiltonian takes the following form.

\begin{definition}[Stat mech Hamiltonian: \emph{correlated noise}]
	\label{defn:ham2}
	For a Pauli $E\in \mathcal{P}^{n}$, and coupling strengths \mbox{$\lbrace J_j:\mathcal{P}_{R_j}\to \mathbb{R} \rbrace_j$}, the stat mech Hamiltonian $H_E$ is defined as
	\begin{align}
	\label{eq:StatMechHam2}
	H_E(\vec c)
	=-\sum_{j,\sigma\in \mathcal{P}_{R_j}}J_j(\sigma)\comm{\sigma}{E}\prod_k\comm{\sigma}{S_k}^{c_k}\!.
	\end{align}
\end{definition}
We notice that we can once again use the multiplicativity of the scalar commutator to rewrite this in the more mathematically convenient form
\begin{align}
H_E(\vec c)
&=-\sum_{j,\sigma\in \mathcal{P}_{R_j}}\!\!J_j(\sigma)\comm{\sigma}{E\prod_kS_k^{c_k}},
\end{align}
and see that our Hamiltonian retains the symmetry we leveraged in the independent case,
\begin{align}
	H_{ES_k}(\vec{c})=H_E(\vec{c}+\hat{k}).
\end{align}

In order for the definition of the stat mech Hamiltonian to connect to the error probabilities of the quantum code for a given noise model, we need to find an analogue of the Nishimori conditions.
Fortunately, our formulation of this condition in the independent case gives an immediate generalisation.

\begin{definition}[Nishimori conditions: \emph{correlated noise}] 
	\label{defn:nishimori2}
	An inverse temperature $\beta$ and coupling strengths $\lbrace J_j\rbrace_j$ satisfy the \emph{Nishimori conditions} with respect to factors $\lbrace \phi_j\rbrace_j$ if
	\begin{align}
	\label{eq:CorrelatedNishimori}
	\beta J_j(\sigma)
	=
	\frac{1}{|\mathcal{P}_{R_j}|}
	\sum_{\tau\in\mathcal P_{R_j}} \log\phi_j(\tau)\comm{\sigma}{\tau^{-1}},
	\end{align}
\end{definition}
Using a proof analogous to the independent case \cref{thm:fundie}, we also get a \emph{fundamental theorem} for this correlated model.
\begin{theorem}[Fundamental theorem of stat mech mapping: \emph{correlated noise}]
	\label{thm:fundie2}
	Given the stat mech Hamiltonian \emph{(}\cref{eq:StatMechHam2}\emph{)} associated to a noise model that satisfies the Nishimori condition \emph{(}\cref{eq:CorrelatedNishimori}\emph{)}, the probability of a logical class of errors is equal to the corresponding partition function,
	\begin{align}
	Z_E=\Pr(\bar E).
	\end{align}
\end{theorem}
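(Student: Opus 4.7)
My plan is to mirror the independent-case proof of \cref{thm:fundie} step for step, with the single-site objects $\mathcal{P}_i$, $J_i$, and $p_i$ replaced throughout by their multi-site analogues $\mathcal{P}_{R_j}$, $J_j$, and $\phi_j$. All three ingredients of the original argument generalise essentially unchanged: the scalar commutator is still multiplicative, the Hamiltonian in \cref{eq:StatMechHam2} retains the symmetry $H_{ES_k}(\vec{c})=H_E(\vec{c}+\hat{k})$ (because $H_E$ depends on $E$ only through $E\prod_k S_k^{c_k}$), and the joint probability now factorises as $\prod_j\phi_j(E_{R_j})$ rather than as $\prod_i p_i(E_i)$, so the same global exponential identification will go through.

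First I would reinterpret \cref{eq:CorrelatedNishimori} as an inverse Fourier transform on the finite abelian group $\mathcal{P}_{R_j}$. Since $\mathcal{P}_{R_j}=\bigtimes_{i\in R_j}\mathcal{P}_i$ and the scalar commutator factors across tensor products, the required orthogonality relation
\begin{align}
\sum_{\sigma\in\mathcal{P}_{R_j}}\comm{\sigma}{\rho}=|\mathcal{P}_{R_j}|\,\delta_{\rho,I}
\end{align}
follows at once from its single-site counterpart. Inverting the Nishimori condition with this relation yields $\sum_{\sigma\in\mathcal{P}_{R_j}}\beta J_j(\sigma)\comm{\sigma}{E}=\log\phi_j(E_{R_j})$, and summing over $j$ before exponentiating at the all-zero configuration identifies $e^{-\beta H_E(\vec{0})}=\prod_j\phi_j(E_{R_j})=\Pr(E)$.

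Second, I would invoke the symmetry property: because shifting $\vec{c}$ by $\hat{k}$ and multiplying the disorder by $S_k$ produce the same Hamiltonian, the sum $Z_E=\sum_{\vec{c}}e^{-\beta H_E(\vec{c})}$ reindexes as a sum over stabilisers applied to $E$ at $\vec{c}=\vec{0}$, which by the previous step equals $\sum_{S\in\mathcal{S}}\Pr(ES)=\Pr(\bar{E})$, as required.

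The one place where I would expect to spend actual care is the handling of factors $\phi_j$ that vanish on part of their domain, which make $J_j$ formally infinite and the Hamiltonian divergent on some configurations. The intended interpretation (frozen degrees of freedom, as in the pure bit-flip example already worked out in the independent case) is correct under the $\log 0/\exp(\log 0)=0$ convention or via a proper limiting procedure, but writing out a clean limit argument that interacts well with the Fourier inversion on each $\mathcal{P}_{R_j}$ is the only piece of bookkeeping that does not carry over automatically from the independent case.
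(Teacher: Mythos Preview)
Your proposal is correct and matches the paper's approach exactly: the paper does not write out a separate proof of \cref{thm:fundie2} at all, but simply states that it follows by a proof analogous to \cref{thm:fundie}, which is precisely the substitution $\mathcal{P}_i\to\mathcal{P}_{R_j}$, $J_i\to J_j$, $p_i\to\phi_j$ that you outline. Your remark about zeros of $\phi_j$ is also consistent with the paper's earlier comment on the $\log 0$ convention.
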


As with the independent case, we will see that this similarly implies that the error-correction threshold manifests as a quenched phase transition (\cref{sec:decoding}), and that this allows us to construct efficient tensor network approximations to the maximum likelihood decoder (\cref{sec:mld}).

\subsection{Noise Hamiltonian}

Consider the case where each factor $\phi_j$ is strictly positive. Suppose we define the local Hamiltonian
\begin{align}
	\tilde H(E):=-\sum_{j}\log\phi_j(E),\label{eqn:noiseham}
\end{align}
which we refer to as the \emph{noise Hamiltonian}. The states of this Hamiltonian are labelled by Pauli errors, elements of $\mathcal{P}^n$. The noise model then corresponds to thermal distribution of this Hamiltonian at inverse-temperature $\beta=1$,
\begin{align}
	\Pr (E) = e^{-\tilde H(E)}.
\end{align}

Expressed in this way, the Nishimori condition, \cref{defn:nishimori2}, can be seen as a relationship between the stat mech Hamiltonian of \cref{defn:ham2} and the noise Hamiltonian of \cref{eqn:noiseham}. Specifically, it takes the form
\begin{align}
	\beta H_E(\vec c)=\tilde{H}\left(E\prod_{k}S_k^{c_k}\right).
\end{align}
In this sense we see that, when the Nishimori condition is satisfied, the interactions within the stat mech Hamiltonian $H_E$ naturally correspond to those in the noise Hamiltonian $\tilde H$, but with a change in the underlying state space.

\subsection{Example: Toric code with correlated bit flips}
\label{subsec:correlated bitflip}

\begin{figure}
	\centering
	\includegraphics{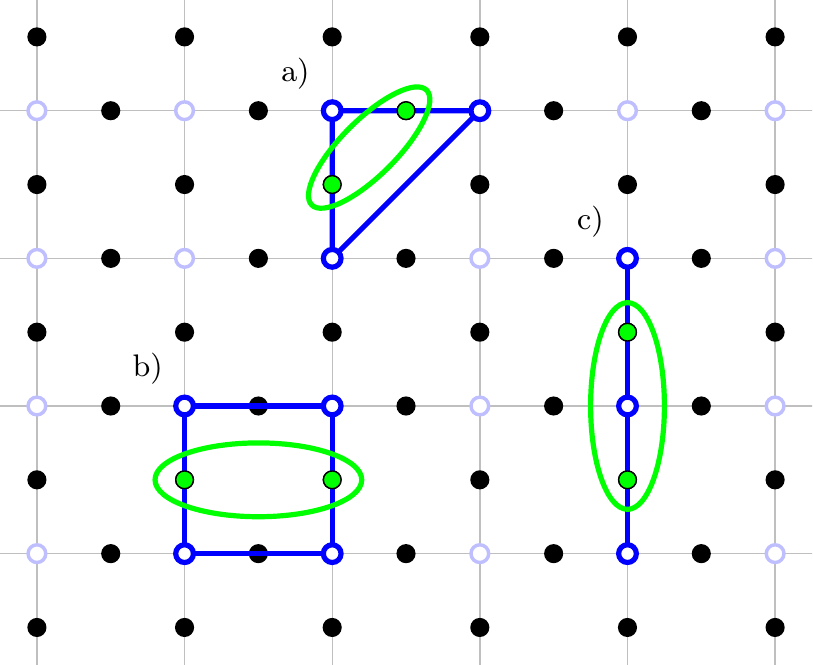}
	\caption{
		Stat mech couplings induced by pairwise correlated bit-flips in the toric code. Solid circles indicate qubits, and hollow circles the spins of our stat mech model, corresponding to stabiliser generators. Green ellipses denote the type of correlations, specifically the regions over which the error model factors. These correspond to the interactions in the noise Hamiltonian \cref{eqn:noisehamtoric}. Blue edges connecting spins indicate the couplings induced in the statistical mechanical model by such correlations. The labelled couplings a), b) and c) respectively correspond to:	a) nearest-neighbour correlations induce a two-body diagonal coupling, b) next-nearest neighbour (across-plaquette) correlations induce a four-body face coupling, c) next-nearest-neighbour (across-vertex) induce two-body distance-2 couplings.
	}
	\label{fig:nntoric}
\end{figure}

We now consider an example of a correlated noise model: correlated bit-flips in the toric code. Consider a noise model defined by an Ising noise Hamiltonian with coupling $\tilde J$ and field strength $\tilde h$,
\begin{align}
	\tilde{H}
	=- \sum_e \tilde h\,x_e-\sum_{e\sim e'}  \tilde J\,x_ex_{e'},
	\label{eqn:noisehamtoric}
\end{align}
where we have chosen the convention that $x_e=-1$ corresponds to $E_e=X$ and $x_e=+1$ to $E_e=I$. Here $\tilde J$ controls the magnitude (and sign) of the correlations, with $\tilde J=0$ corresponding to independent errors, $\tilde J>0$ to bunching errors, and $\tilde J<0$ to anti-bunching errors.

For a given error $E$ (and the corresponding values of the spin variables $x_e$), the stat mech model is of the form
\begin{align}
	H_E = -\sum_e \left(h_ex_e\right){\color{blue}\prod_{v\in\partial e}s_v} - \sum_{e\sim e'}\left(Jx_ex_{e'}\right){\color{blue}\prod_{v\in\partial(ee')}s_v},
\end{align}
where $\partial(ee')$ denotes the vertices that surround either $e$ or $e'$, but not both. Here the variables $\color{blue} s_v$ form the degrees of freedom, and the $x_e$ form the quenched disorder. The Nishimori conditions for this model reduce to \mbox{$\beta J=\tilde J$} and \mbox{$\beta h=\tilde h$}. 

As in the independent case, we can see that the $\tilde h$ field term in our noise Hamiltonian has induced a 2D random-bond Ising model. The addition of the $\tilde J$ term has induced additional longer range couplings. 
The geometry of these couplings is shown is \cref{fig:nntoric} for the case where the noise correlations couple nearest-neighbour and next-nearest-neighbour qubits. 
The corresponding stat mech Hamiltonian has a similar locality, and contains at most 4-body interactions among the ${\color{blue}s_v}$ degrees of freedom.


\section{Numerics}
\label{sec:numerics}

A key advantage of the stat mech mapping is that it allows us to reappropriate techniques for determining the phase diagrams of classical spin systems for approximating the thresholds of quantum codes. 
By way of example, we consider using Monte Carlo simulations to determine the threshold of the toric code under a correlated model of bit-flip noise.

We shall consider the model of `across-plaquette' correlated bit-flips described in \cref{subsec:correlated bitflip} (see \cref{fig:nntoric}b). 
We will restrict our attention to noise which obeys certain natural symmetries, namely, that the correlations are site-independent and symmetric between correlated errors. 
This is equivalent to saying that the noise Hamiltonian has symmetric, site-independent interaction terms. 

Instead of expressing this model in terms of an Ising noise Hamiltonian, it will be convenient to parameterise this model in terms of the marginal error rate. 
Let $p$ denote the marginal error probability on any site, i.e.\ $\Pr(E_e=X)=p$ on any edge $e$. 
Suppose that the error probability given that a neighbouring error has or has not occurred is $p_\pm$, so that 
\begin{subequations}
\begin{align}
\Pr(E_e=X|E_{e'}=X)=p_+,\\
\Pr(E_e=X|\mathrlap{E_{e'}=\,I}\phantom{E_{e'}=X})=p_-,
\end{align} 
\end{subequations}
where $e$ and $e'$ lie on opposite sides of a plaquette. 
For the marginal probability to be $p$, these probabilities are subject to the consistency condition $p=pp_++(1-p)p_-$. 

If $p_-<p<p_+$ the errors tend to bunch together, whereas they tend to anti-bunch if $p_->p>p_+$. 
A natural way of parameterising these correlations is in terms of the \emph{correlation parameter}, 
\begin{align}
	\eta:=p_+/p_-,
\end{align}
where $\eta=1$ corresponds to uncorrelated noise, $\eta>1$ to bunched, and $\eta<1$ to anti-bunched.  

In the limit of infinite bunching ($\eta\to \infty$) the model produces exclusively logical errors, meaning that the threshold vanishes, $p_{\mathrm t}\to 0\%$. 
Similarly, for infinite anti-bunching ($\eta\to 0$) the model cannot produce non-trivial logical errors, and we expect that $p_{\mathrm t}\to 50\%$~\cite{Tuckett2018}. 
We will be considering the case of mild bunching correlations, $\eta=2$, which one would expect to lower the threshold.

Applying the stat mech mapping, we get a random-bond Ising-type model on a square lattice, containing 2-body edge terms and 4-body face terms. The Hamiltonian takes the form
\begin{align}
H_E(\vec s)
=&-\sum_{e} \left(J_2 x_e\right) {\color{blue}\prod_{v\in \partial e}s_v}
-\sum_{e\sim e'} \left(J_4 x_ex_{e'}\right) {\color{blue}\prod_{v\in \partial(ee')}s_v},
\end{align}
where $x_e=+1$ if $E_e=I$ and $x_e=-1$ if $E_e=X$, and $e\sim e'$ denotes edges lying across a plaquette from each other. 

Normalising our Hamiltonian such that $J_2\equiv 1$, our system has two Nishimori conditions. When \mbox{$\eta=2$}, the first of these conditions is
\begin{align}
	J_4=\frac{\log \frac{1-p}{2}
		}{\log 4p^2}.
\end{align}
We will impose this condition, and determine the phase diagram of the system in the two remaining parameters: the error probability $p$ and temperature $T$. The threshold probability can be found by finding the intersection of this phase boundary with the second remaining Nishimori condition,
\begin{align}
\beta_{\textrm{Nish}}=-\log 4p^2.
\label{eqn:numericalnishimori}
\end{align} 

In \cref{fig:montecarlo} we show the phase diagrams of the \mbox{$\eta=1$} uncorrelated model, and our numerical results for the \mbox{$\eta=2$} correlated model. 
The details of our numerical simulation are found in \cref{app:simulation}. 
We find that that under these mild correlations the threshold drops to \mbox{$p_{\mathrm t}(\eta=2)=10.04(6)\%$}, from the uncorrelated threshold of \mbox{$p_{\textrm t}(\eta=1)=10.917(3)\%$}~\cite{ParisenToldin2009}, confirming the earlier intuition that correlated errors will indeed reduce the threshold. 

\begin{figure}
	\centering
	\includegraphics{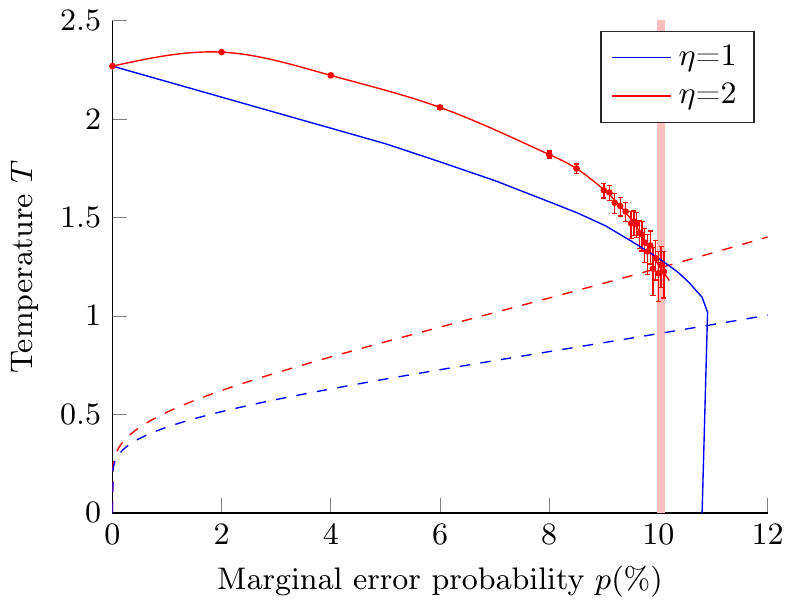}
	\caption{Phase boundary of the stat mech models corresponding to uncorrelated ($\eta=1$) and correlated ($\eta=2$) bit-flip noise in the toric code. The solid lines indicate the phase boundaries of the two models, with error bar indicating statistical uncertainty in our numerical results. The $\eta=1$ data is taken from Refs.~\cite{Honecker2001,ParisenToldin2009,Amoruso2004,Merz2002,Dennis2001}, and the $\eta=2$ comes from our numerics, as detailed in \cref{app:simulation}. The dashed lines indicate the corresponding Nishimori conditions. The shaded red region indicates our estimate of $p_{\mathrm t}=10.04(6)\%$ for the threshold of the correlated model.}
	\label{fig:montecarlo}
\end{figure}


\section{Spatio-temporal correlations}
\label{sec:circuit}

In \cref{sec:correlated} we considered the stat mech mapping for error models which factor. 
This model applies in the important case of spatially correlated errors, followed by ideal measurements. 
We would like to generalise this construction to the case of multiple rounds of syndrome measurements, subject to \emph{spatio-temporal} correlations in the noise. 
We will do this by constructing a subsystem code which, subject to purely spatially correlated noise, reproduces the measurement statistics of our original stabiliser code under a spatio-temporally correlated noise model. 
We refer to this construction as the \emph{history code}. 
Doing this in such a way that the locality of our code and noise model are preserved, we can then apply the construction of \cref{sec:correlated} to the history code to give a corresponding stat mech model.

As well as presenting the construction of the history code, we will also discuss an important family of spatio-temporally correlated noise models which possess a correlation structure which make them amenable to this construction: circuit-based noise. 
In this model we consider a syndrome measurement procedure composed of faulty gates. 
The key distinction between this and purely spatial correlations is that errors incurred at an earlier time can be spread around by subsequent measurement circuits. 
Considering Clifford measurement circuits subject to Pauli noise, this gives a spatio-temporally correlated Pauli noise model. 
By applying the construction of \cref{sec:correlated} to the corresponding history code, we get a stat mech model with a phase transition corresponding to the \emph{fault-tolerant threshold} of the original code.

\subsection{History code}
\label{subsec:history}

The idea behind the history code is to convert time into an additional spatial dimension, allowing us to naturally convert spatio-temporal correlations into purely spatial correlations, albeit living in one dimension higher. 
Specifically, for each site $i$ in our original code, and measurement round $t$, the history code will have a corresponding site $(i,t)$.

\subsubsection{Noise model}

The error models we shall consider are those which factor in a spatio-temporal sense. Specifically, let $E^{(t)}$ denote the error to which our code is subjected \emph{prior} to the $t$th round of measurements, and $E:=(E^{(1)},\dots,E^{(T)})$ denote the \emph{error history}. We will consider models in which the distribution of error histories factors, i.e.\ 
\begin{align}
\Pr(E) = \prod_l\phi_l(E),
\end{align}
where the supports of $\phi_l$ are local in both space \emph{and} time.

As the name may suggest, we will take our error model on the history code to be given by the error \emph{histories} of our noise with spatio-temporal correlations. 
Specifically let $\hat E^{(t)}$ denote the action of $E^{(t)}$ on the $t$th layer of the history code, and correspondingly let the action of an error history be denoted $\hat E:=\bigotimes \hat E^{(t)}$. Given this action, the distribution of errors is simply given by the original error model, and so
\begin{align}
\Pr(\hat E)=\prod_l \phi_l(\hat E).
\end{align}
As $\hat E$ can be interpreted as a purely spatial error in the history code, instead of an error history, this now constitutes a noise model with purely spatial correlations.

\subsubsection{Gauge generators}

Let $M_{j}^{(t)}$ denote the $j$th Pauli measurement occurring in the $t$th round of syndrome measurements. 
For error correction to remain well-defined, the final measurement round will need to consist of ideal measurements of all the stabiliser generators%
\footnote{This is required to preclude errors occurring after measurements, which are clearly uncorrectable. This assumption is equivalent to only requiring that an error correction procedure correct errors prior to the final measurement round for it to be deemed successful.}. 
As such if we let $T$ be the number of measurement rounds, this assumption means that $\left\lbrace M_j^{(T)}\right\rbrace_j=\lbrace S_k\rbrace_k$. 
We will require the measurements within each round have disjoint support with the exception of the special final round.

Let $\hat M_j^{(t)}$ denote the action of $M_j^{(t)}$ on the corresponding layer of the history code. By construction, we have that the measurement statistics of these operators reproduce those of the measurements in the original code, as
\begin{align}
	\comm{\hat M_j^{(t)}}{\hat E}=\comm{M_j^{(t)}}{E^{(t)}}
\end{align}
for any $j$ and $t$.

We now want to consider the history code itself, which takes the form of a subsystem code. To find the gauge generators, we can consider when two error histories are logically equivalent. As it is only the final accumulated error which determines whether or not a logical error has occurred, any two error histories are logically equivalent if and only if they possess the same syndrome, and do not differ by a non-trivial logical operator in the original code on the final time-slice. 
This implies that the gauge group is generated locally, and that these generators come in two forms: for $t<T$, the generators are given by the generators of the centraliser of $\hat M_{j}^{(t)}$ on the support of $\hat M_{j}^{(t)}$, and for $t=T$ they are just given by the measurement operators themselves $\lbrace \hat M_{j}^{(T)}\rbrace_j$.

\subsection{Circuit-based model}

We now turn our attention to a spatio-temporally correlated noise model which arises in the study of fault-tolerant error correction---circuit-based noise. In this noise model we explicitly take into account the circuits used to implement our syndrome measurements, and consider the faults within the constituent gates.

\subsubsection{Measurement circuits}

\begin{figure}
	\includegraphics{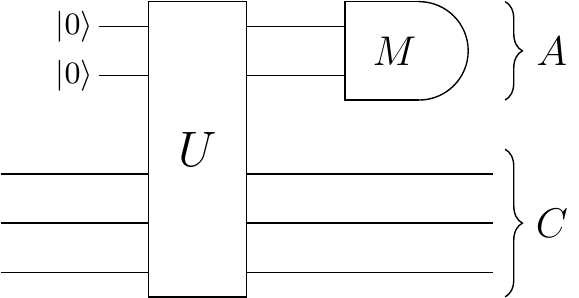}
	\caption{The form of measurement circuits we shall be considering. For each stabiliser generator $S$, several ancillae are prepared in the state $|0\rangle$, an entangling Clifford $U$ is performed, and then a Pauli measurement $M$ is performed on the ancillae. Sets $C$ and $A$ are used to denote the code and ancillary qubits involved in this measurement.}
	\label{fig:canonical}
\end{figure}

We will consider measurement circuits of the form shown in \cref{fig:canonical}: several ancillae are prepared in the state $|0\rangle$, a Clifford gate is applied to these and the qudits to be measured, and finally a Pauli measurement is performed on the ancillae. For the measurement circuit corresponding to $S_{j}^{(t)}$, we let $U_{j}^{(t)}$ denote the Clifford gate applied and $M_{j}^{(t)}$ the Pauli being measured. 
We denote the set of code qudits and ancillae involved in this syndrome measurement by
\begin{subequations}
\begin{align}
	C_j^{(t)}&:=\mathrm{supp}\left(S_j^{(t)}\right),
	\\
	A_j^{(t)}&:=\mathrm{supp}\left(M_j^{(t)}\right),
\end{align}
\end{subequations}
and let \mbox{$R_j^{(t)}:=C_j^{(t)} \cup A_j^{(t)}$} denote the full set of qudits involved, such that \mbox{$\mathrm{supp}\left(U_j^{(t)}\right) = R_j^{(t)}$}. The assumption that our measurements in each round are non-overlapping is equivalent $\left\lbrace C_{j}^{(t)}\right\rbrace_j$ being non-overlapping for each $t$. The layout of the history code corresponding to such a measurement procedure is shown in \cref{fig:circuit}.

As noted earlier, the final error correction round will correspond to an ideal measurement of all of the stabiliser generators. As such we will take $\lbrace M_j^{(T)}\rbrace_j=\lbrace S_k\rbrace_k$.

\subsubsection{Noise model}
\label{subsec:noisemodel}

\begin{figure}[h]
	\includegraphics{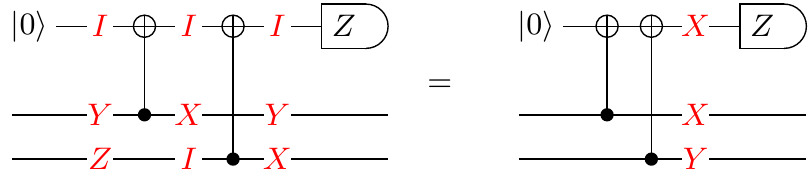}
	\caption{General Pauli errors being pushed through a measurement circuit. Notice that the $Y$ error on the upper code qubit spreads onto the measurement qudit.}
	\label{fig:pullthrough}
\end{figure}

\begin{figure*}
	\centering	
	\includegraphics{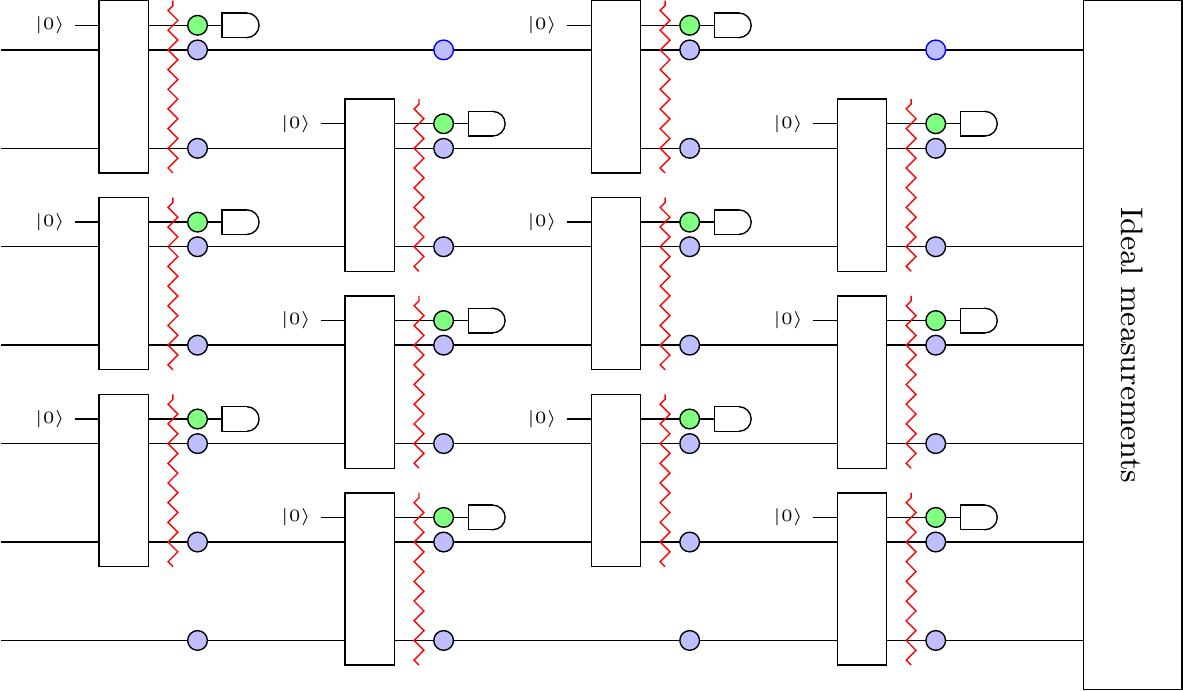}
	\caption{Laying out the qubits of the history code on a syndrome extraction circuit. The blue dots indicate code qubits contained in sets $\lbrace C_j^{(t)}\rbrace_{j,t}$, green dots indicate ancilla qubits contained in sets $\lbrace A_{j}^{(t)}\rbrace_{j,t}$, and red squiggles indicate the location of errors $\lbrace \epsilon_j^{(t)}\rbrace_{j,t}$.}
	\label{fig:circuit}
\end{figure*}

Consider now performing the above syndrome measurement circuits, subject to Pauli noise. For simplicity we will assume that the errors experienced within different measurement circuits are independent%
\footnote{More generally any model which produces factored spatio-temporally correlated noise could be used.}%
, but will allow arbitrary correlations within each circuit. 
Without loss of generality, we can push the error through the circuit, giving an effective error in $\mathcal P_{R_{j}^{(t)}}$ acting \emph{after} the application of $U_{j}^{(t)}$ (see \cref{fig:pullthrough}). 
Since we have assumed that our circuits are Clifford, this updated Pauli error can be computed efficiently. 
Let $p_{j}^{(t)}$ be the distribution on $\mathcal P_{R_{j}^{(t)}}$ of these pushed-through errors.

There are two error sources which contribute to $E^{(t)}$: errors which accrued from previous syndrome measurement rounds, and those which occurred in the $t$th round itself. 
This means that our errors satisfy the recurrence relation
\begin{align}
E^{(t)}=\epsilon^{(t)}\cdot U_tE^{(t-1)}U_t^\dag,\label{eqn:decomp}
\end{align}
where $\epsilon^{(t)}$ denotes the freshly introduced errors which occurred in the $t$th round, and $U^{(t)}$ to the unitary action of the syndrome measurement circuits in the $t$th round. 

Decomposing this further into the individual measurement circuits, we have
\begin{align}
U^{(t)}=\bigotimes_j U_{j}^{(t)}
\quad\text{and}\quad
\epsilon^{(t)}=\bigotimes_j \epsilon_{j}^{(t)},
\label{eqn:spatial}
\end{align}
where $\epsilon_{j}^{(t)}$ and $U_{j}^{(t)}$ are supported solely on $R_{j}^{(t)}$, and $\epsilon_{j}^{(t)}$ is distributed according to $p_{j}^{(t)}$. As our code is assumed to be error-free prior to the beginning of the syndrome measurements, we have the initial condition $E^{(0)}=I$, and so this recurrence relation entirely describes the error model.

\subsubsection{Factorising circuit noise}

We start by noticing that in \cref{eqn:decomp} the errors at time $t$ are entirely determined by the errors at time $t-1$, and the newly incurred errors $\epsilon^{(t)}$. 
As we have assumed that these new errors incurred in each round of syndrome measurements
are independent, our error model is therefore Markovian, allowing us to factorise our noise model in the temporal direction
\begin{align}
	\Pr(E^{(1)},\dots,E^{(T)})=\prod_t \Pr(E^{(t)}|E^{(t-1)}).\label{eqn:factor_temporal}
\end{align}

Next we can decompose this into the individual measurement circuits using \cref{eqn:spatial}. Specifically we know that the individual new errors $\epsilon_{j}^{(t)}$ are independently distributed according to $p_{j}^{(t)}$, which allows us to express the conditional probability as
\begin{align}
	\Pr\left( E^{(t)} \middle| E^{(t-1)} \right)
	=\prod_j p_{j}^{(t)}\left(\epsilon_{j}^{(t)}\right).\label{eqn:factor_spatial}
\end{align}

We now want to put this expression back purely in terms of the accumulated errors $E^{(t)}$, instead of the fresh errors $\epsilon^{(t)}$. Inverting the recurrence relation \cref{eqn:decomp}, we have that
\begin{align}
\epsilon^{(t)}=E^{(t)}\cdot U^{(t)}E^{(t-1)\dag} U^{(t)\dag}.
\end{align}
Recalling that $\epsilon^{(t)}$ and $U^{(t)}$ factor across $\lbrace R_{j,t}\rbrace_j$ (see \cref{eqn:spatial}), this implies that
\begin{align}
\epsilon^{(t)}_j=\left[E^{(t)}\right]_{R_j^{(t)}}\cdot U^{(t)}_j\left[E^{(t-1)\dag}\right]_{C_j^{(t)}} U^{(t)\dag}_j,
\end{align}
where $[P]_R$ denotes the restriction of a Pauli $P$ to a region $R$. Putting this together with \cref{eqn:factor_temporal,eqn:factor_spatial}, we have that the distribution on error histories \mbox{$\Pr(E^{(1)},\dots,E^{(T)})$} factors both in space and time, allowing us to apply the history code construction of \cref{subsec:history}.


\section{Error correction as a statistical mechanical phase}
\label{sec:decoding}

The fundamental theorem of the stat mech mapping, \cref{thm:fundie2}, links the equilibrium thermodynamic properties of our stat mech model and the error correction properties of our code via the disordered partition function and the error class probabilities. 
In fact, a much stronger connection is true, as has already been noted by previous authors in the case of independent noise, beginning with Dennis \textit{et al.}~\cite{Dennis2001}. 
Previous work has shown that the regions in parameter space for independent noise in which a code can and cannot be decoded are phases in the associated stat mech system~\cite{Dennis2001,Bombin2012,Kubica2017}. 
This implies that the error correction threshold of our code manifests as a phase transition in the associated stat mech model. 
Moreover, there is an explicit order parameter that captures this phase transition, although unfortunately it actually a ``disorder'' parameter in that it involves the amount of disorder in the stat mech model. 

In this section, we review and extend this connection to the general case of correlated noise.
We define appropriate notions of ``above'' and ``below'' the threshold and prove that the disorder parameter exhibits non-analytic behaviour in the sense that it converges or diverges if and only if the code is above or below the threshold, respectively. 
This shows a precise sense in which the phase boundary of the stat mech model is exactly the threshold of the corresponding code. 

In \cref{sec:numerics} we saw how, using this correspondence, numerical techniques to analyse phase transition in stat mech systems can be used to give approximations of code thresholds. We also note that this correspondence also opens the door to applying analytic techniques for studying phase transitions, such as the duality method~\cite{Bombin2012,Nishimori2007,Ohzeki2009}.

\subsection{Maximum probability and maximum likelihood decoding}

A decoder is an algorithm which takes as input the syndrome and attempts to estimate which error occurred. 
A natural starting point would be a decoder which simply outputs the most likely error in this error model among those which are consistent with the observed syndrome. 
We will refer to this as the \emph{maximum probability} (MP) decoder.

For degenerate codes, where several logically equivalent errors can have the same syndrome, the MP decoder is generally sub-optimal. 
This stems from the fact that successful decoding does not require the decoder to output precisely the error which occurred, but just an error which is logically equivalent. 
As such, the ideal decoding will not correspond to the single error with the highest probability, but the \emph{error class} with the highest probability. 
The optimal decoder therefore outputs an error from this most likely class, and we will refer to this as the \emph{maximum likelihood} (ML) decoder.

To see the sub-optimality of the maximum probability decoder, consider the following example. Suppose a syndrome $s$ is measured which is consistent with three errors, $E_1,E_2,E_3$, which occur with probabilities
\begin{align}
\Pr(E_1)=4\%,\quad
\Pr(E_2)=3\%,\quad
\Pr(E_3)=3\%.
\end{align}
If $E_2$ and $E_3$ are logically equivalent, then the error \emph{class} $\overline{E_2}$ is more likely than $\overline{E_1}$, even though $E_1$ itself is the single most likely error. In this case we can see that the MP decoder will be sub-optimal, and the conditional success probabilities for the two decoders are
\begin{subequations}
	\begin{align}
	\Pr\left( \text{MPD success} \middle| s \right)=40\%,\\
	\Pr\left( \text{MLD success} \middle| s \right)=60\%.
	\end{align} 
\end{subequations}

For each syndrome $s$, let $C_s$ denote an arbitrary Pauli with syndrome $s$. Similarly, for each logical $l$, let $L_l$ denote a Pauli corresponding to logical $l$. The sets $\lbrace \overline{C_sL_l}\rbrace_{s,l}$ correspond to the logical error classes, and form a partition of the Paulis $\mathcal P^n$. 

For a decoder to always return the code back to the code space, we require that when a syndrome $s$ is input, it always returns an error with this same syndrome $s$. As such, without loss of generality, a decoder can be taken to be of the form $s\mapsto C_sL_{\delta(s)}$, where $\delta$ is a map from syndromes to logicals. In this notation, the maximum probability decoder takes the form
\begin{align}
	\delta_{\textrm {MP}}(s):=\argmax_l \max_{E\in \overline{C_sL_l}}\Pr(E),
\end{align}
and the maximum likelihood decoder the form
\begin{align}
	\delta_{\textrm {ML}}(s):=\mathop\mathrm{arg\,max}_{l} \Pr(\overline{C_sL_l}).
\end{align}

\subsection{Minimum free energy and minimum energy decoding}

Let $H_E$ denote the stat mech model corresponding to our code and error model (see \cref{sec:mapping,sec:correlated,sec:circuit}), which satisfies the Nishimori conditions at inverse temperature $\beta_{\textrm{N}}$.

Similar to the maximum probability and likelihood decoders defined above, consider the $\beta$-\emph{minimum free energy} ($\beta$-MFE) decoder, which is given by minimising the free energy at inverse temperature $\beta$, 
\begin{align}
	\delta_{\beta\textrm{-MFE}}(s):=\argmin_l F_{C_sL_l}(\beta),
\end{align}
where $F_E(\beta):=-\frac{1}{\beta}\ln Z_E(\beta)$. 

When satisfying the Nishimori conditions, $\beta=\beta_{\textrm{N}}$, we can apply the fundamental theorem of the stat mech mapping (\cref{thm:fundie2}), to give
\begin{align}
	F_E(\beta_{\textrm{N}})=-\frac{1}{\beta_{\textrm{N}}}\ln \Pr(\overline{E}).
\end{align}
As this is a monotonically decreasing function of the error class probability, this tells us that the $\beta_{\textrm{N}}$-MFE decoder is precisely the maximum likelihood decoder,
\begin{align}
	\delta_{\beta_{\textrm{N}}\textrm{-MFE}}\equiv \delta_{\textrm {ML}}.
\end{align}

Similarly, if we take $\beta\to \infty$, then the free energy reduces to the minimum energy,
\begin{align}
	\lim\limits_{\beta\to\infty} F_E(\beta) = \min_{\vec s} H_E(\vec s). 
\end{align}
\cref{thm:fundie2} gives us that the energies of our stat mech model correspond to error probabilities, and so
\begin{align}
	\lim\limits_{\beta\to\infty} F_E(\beta) = \min_{E'\in \overline{E}}-\frac{1}{\beta_{\textrm{N}}}\ln\Pr(E').
\end{align}
As this is a monotonically decreasing function of the error probabilities, this tells us that the $\infty$-MFE decoder---which one could refer to as the \emph{minimum energy} decoder---is precisely the MP decoder,
\begin{align}
	\delta_{\infty\textrm{-MFE}}\equiv \delta_{\textrm {MP}}.
\end{align}

This reduction to MFE decoding implies that any of the plentiful methods for approximating partition functions~\cite{Metropolis1953,Hastings1970}, or free energy differences~\cite{Bennett1976} can be used to implement approximate ML decoding. 
In \cref{sec:mld} we will expand upon this connection, giving a tensor network algorithm which approximates ML decoding.

\subsection{The error correction threshold as a phase transition}

An important way of quantifying the resilience of a quantum error correction procedure to an error model is the \emph{error threshold}. Specifically, consider a family of quantum codes, with a logical algebra of finite dimension $K$, and an error model which depends on a parameter $\theta\geq 0$. 
We define the notion of threshold as follows. 
A code family has a \emph{threshold} if there exists a $\theta_t>0$ such that the asymptotic success probability is maximal for $\theta<\theta_t$, and minimal for $\theta>\theta_t$, i.e.\
\begin{align}
	\lim\limits_{n\to\infty} \Pr(\text{Decoder success})=\begin{dcases}
	1 & \text{if }\theta<\theta_t,\\
	1/K & \text{if }\theta>\theta_t.
	\end{dcases}
\end{align}
These regimes we refer to as being \emph{below threshold} and \emph{above threshold} respectively. 

Clearly this is a rather strong notion of threshold. 
One can imagine codes or noise models which possess an intermediate between being below or above threshold, where the code can be decoded better than random chance, but not perfectly.
As an example, having asymmetric rates of independent $X$ and $Z$ noise in the surface code will generate such a gap. 
Therefore, not all codes necessarily possess a threshold in this sense. 
However, the presence of a threshold in our sense is often taken as a desirable property in engineered quantum codes. 
We also note that for this section we have restricted our attention to codes with finite $K$.
Some codes with growing $K$, such as finite-rate LDPCs, are also known exhibit such an intermediate regime~\cite{Kovalev2018,Jiang2018}.

We also use the notion of threshold with respect to a particular choice of decoder, not just the optimal decoder. 
However, unless otherwise specified, we will be default to considering the threshold with respect to the optimal decoder, and refer to this as \emph{the} threshold.

Given that the threshold corresponds to a dramatic jump in the success probability of decoding, one might naturally suspect that this is precipitated by a corresponding dramatic jump in the error class probabilities. 
As these probabilities equal the partition function of our stat mech models, this would also suggest a (disordered) phase transition. 
To see that this is indeed the case, we will consider the disorder parameter given by the free energy cost of a non-trivial logical operator $L_m$,
\begin{subequations}
\begin{align}
\Delta_m(E)
&:=F_{EL_m}(\beta_{\textrm N})-F_{E}(\beta_{\textrm N})\\
&=-\frac{1}{\beta_{\textrm N}}\log Z_{EL_m}+\frac{1}{\beta_{\textrm N}}\log Z_E.
\end{align}
\end{subequations} 
We note that for topological codes which have string-like (or sheet-like in higher dimensions) logical operators, this corresponds to the free energy cost of a domain wall, as noted by Dennis \textit{et al.}~\cite{Dennis2001}. 
Let $\theta_p$ denote the point at which the stat mech model undergoes a phase transition corresponding to this parameter. 

We now want to show that the threshold corresponds to a phase transition in our stat mech model, \mbox{$\theta_t=\theta_p$}. 
As a first step, we consider the quenched average, \mbox{$\Delta_m:=\langle \Delta_m(E)\rangle_E$}. 
We would expect that $\Delta_m\to \infty$ below threshold and $\Delta_m\to 0$ above. 
The former limit was shown in Ref.~\cite{Kubica2017}, which implies that $\theta_p>\theta_t$. 
We present a simplified proof below.

\begin{lemma}[Divergence in mean]
	\label{lem:trans=thresh_mean}
	If the code is below threshold $\theta<\theta_t$, then the quenched average free energy cost diverges,
	\begin{align}
	\lim\limits_{n\to \infty}\Delta_m=\infty.
	\end{align}
\end{lemma}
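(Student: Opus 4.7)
The plan is to convert the quenched average into an average Kullback--Leibler divergence, and then extract the divergence via Jensen's inequality. By the fundamental theorem (\cref{thm:fundie2}), $F_E(\beta_{\textrm N}) = -\beta_{\textrm N}^{-1}\log \Pr(\bar E)$, so the per-disorder free energy cost is
\begin{align}
\Delta_m(E) = \frac{1}{\beta_{\textrm N}}\log\frac{\Pr(\bar E)}{\Pr(\overline{EL_m})}.
\end{align}
Since this depends only on the logical class of $E$, the quenched average regroups cleanly by syndrome. Writing $\Pr(\overline{C_s L_l}) = p(s)\,q(l|s)$ with $p(s)$ the syndrome probability and $q(l|s)$ the posterior over logical classes given $s$, one obtains
\begin{align}
\Delta_m = \frac{1}{\beta_{\textrm N}}\,\mathbb{E}_{s\sim p}\bigl[D_{\mathrm{KL}}\bigl(q(\cdot|s)\,\big\|\,q(\cdot{+}m|s)\bigr)\bigr].
\end{align}

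The below-threshold assumption says that the maximum-likelihood success probability $\mathbb{E}_s[\max_l q(l|s)]\to 1$, and hence $\epsilon_s := 1-\max_l q(l|s)$ satisfies $\mathbb{E}_s[\epsilon_s]\to 0$. For each $s$, let $l^* = \argmax_l q(l|s)$; since $l^*+m\neq l^*$, we have $q(l^*+m|s)\leq \epsilon_s$. Applying the data-processing inequality to the binary coarse-graining $\{l^*\}$ versus its complement then lower-bounds the KL divergence by the corresponding binary KL, which in turn is at least $(1-\epsilon_s)\log\bigl((1-\epsilon_s)/\epsilon_s\bigr) + \epsilon_s\log\epsilon_s$. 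For $\epsilon_s\leq 1/2$, this is at least $\tfrac{1}{2}\log(1/\epsilon_s) - O(1)$.

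To finish, restrict to the good set $G := \{s : \epsilon_s\leq 1/2\}$, which by Markov's inequality satisfies $p(G)\to 1$. We then have $\mathbb{E}_s[D_{\mathrm{KL}}] \geq \tfrac{1}{2}p(G)\,\mathbb{E}_s[\log(1/\epsilon_s)\mid G] - O(1)$, and Jensen's inequality (convexity of $-\log$) gives $\mathbb{E}_s[\log(1/\epsilon_s)\mid G]\geq -\log\mathbb{E}_s[\epsilon_s\mid G]\to\infty$, since $\mathbb{E}_s[\epsilon_s\mid G]\leq \mathbb{E}_s[\epsilon_s]/p(G)\to 0$. I expect the main obstacle to be a subtle one: convergence of $\mathbb{E}_s[\epsilon_s]$ to zero is a relatively weak statement, and \emph{a priori} individual $\epsilon_s$ could remain bounded away from zero on a small but non-vanishing subset of syndromes; the bootstrapping step via data processing followed by Jensen is what upgrades convergence-in-mean of $\epsilon_s$ into divergence of the averaged KL.
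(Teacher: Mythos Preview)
Your argument is correct, but it takes a longer route than the paper. Both proofs begin by recognising that $\beta_{\textrm N}\Delta_m$ is the Kullback--Leibler divergence of the joint distribution $\Pr(s,l)$ against its shift $\Pr(s,l+m)$. At this point the paper applies the log sum inequality (equivalently, data processing) to marginalise out the \emph{syndrome}, yielding
\begin{align}
\beta_{\textrm N}\Delta_m \;\geq\; \sum_l \Pr(l)\log\frac{\Pr(l)}{\Pr(l+m)},
\end{align}
which is the KL divergence of the marginal logical distribution against its shift. Since below threshold $\Pr(l=0)\to 1$ and $\Pr(l=m)\to 0$, this marginal KL diverges immediately. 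By contrast, you keep the syndrome-conditional structure $\mathbb{E}_s[D_{\mathrm{KL}}(q(\cdot|s)\|q(\cdot{+}m|s))]$, then apply data processing \emph{within} each syndrome (the binary coarse-graining $\{l^*\}$ versus its complement), and finally need a Markov-plus-Jensen bootstrap to upgrade convergence-in-mean of $\epsilon_s$ to divergence of the averaged KL. The paper's single marginalisation step sidesteps all of this: once you pass to the marginal $\Pr(l)$, the below-threshold hypothesis is exactly the statement that this marginal concentrates at $l=0$, and no further work with good sets or Jensen is needed. Your approach does buy something, though: it makes explicit the per-syndrome structure, which is closer in spirit to the convergence-in-probability arguments used for the converse in \cref{lem:trans=thresh_prob}.
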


\begin{proof}
	By recalling that $Z_E=\Pr(\overline E)$, we can see that $\Delta_m(E)$ only depends on $E$ up to logical equivalence. 
	A convenient set of representatives from each error class are given by the Paulis $\lbrace D_sL_l\rbrace_{s,l}$, where $s$ and $l$ correspond to the syndrome and logical degrees of freedom, and $\lbrace D_s\rbrace_s$ are the decoding Paulis associated with the MLD. 
	As shorthand, we will denote $\Pr(s,l):=\Pr(\overline{D_sL_l})$. 
	
	Using this, we can see that $\Delta_m$ can be expanded as a Kullback-Leibler divergence, 
	\begin{align}
	\Delta_m
	&=\frac{1}{\beta_{\textrm N}}\sum_{s,l}\Pr(s,l)\log \frac{\Pr(s,l)}{\Pr(s,l+m)}.
	\end{align}
	
	Applying the log sum inequality to the summation over syndromes, we get a bound in terms of the KL divergence of the marginal $\Pr(l)$,
	\begin{align}
	\Delta_m\geq \frac{1}{\beta_{\textrm N}}\sum_l\Pr(l)\log \frac{\Pr(l)}{\Pr(l+m)}.
	\end{align}
	We note that $\Pr(l=0)$ is the probability of the MLD succeeding. 
	Since we are below threshold we have that $\Pr(l=0)\to 1$, and so $\Delta_m\to\infty$.  
\end{proof}

Above the threshold one cannot simply consider the quenched average $\Delta_m$. 
To see this, consider a code in which a single unlikely syndrome can be perfectly decoded with certainty, and all others provide a uniform logical error. 
The correctable syndrome gives that $\Delta_m=\infty$, despite the fact that the code is above threshold. 
As such, we cannot necessarily conclude that $\Delta_m\to 0$ above threshold. 

Therefore, to provide a converse statement we must consider more than just the quenched average. 
Here we will formalise an argument first sketched in Ref.~\cite{Bombin2012} that shows that the way to obtain a converse is to change our notion of convergence. 

Suppose for the moment that $\Delta_m(E)$ concentrates around a value $\tilde\Delta_m$, in the sense that 
\begin{align}
\lim\limits_{n\to\infty}\Pr\Bigl(|\Delta_m(E)-\tilde\Delta_m|\leq \epsilon\Bigr)=1\quad \forall \epsilon>0.
\end{align}
A phase transition corresponds to a jump in this `typical' value $\tilde \Delta_m$. Because $\Delta_m(E)$ is not bounded from above, we notice that this typical value $\tilde \Delta_m$ need not necessarily correspond to the mean $\Delta_m$. 
Indeed, in the above counterexample, where $\Delta_m=\infty$ above threshold, we nonetheless see that $\tilde \Delta_m\to 0$ as expected. 
As such, we see that the mean is not necessarily the correct figure-of-merit. 
If instead we look to this typical value, then we can see that $\tilde\Delta_m\to\infty$ and $\tilde \Delta_m\to 0$ below \emph{and} above threshold respectively. 

\begin{lemma}[Divergence/convergence in probability~\cite{Bombin2012}]
	\label{lem:trans=thresh_prob}
	The code is below threshold if and only if the free energy cost of every non-trivial logical $m$ diverges in probability, 
	\begin{align}
	\lim\limits_{n\to\infty}
	\Pr\bigl( \Delta_m(E)\geq t \bigr)=1\quad\forall t,m\neq 0.
	\end{align}
	Similarly, the code is above threshold if and only if the free energy cost of every logical converges to zero in probability,
	\begin{align}
	\lim\limits_{n\to \infty} \Pr\bigl(\,|\Delta_m(E)|\leq \epsilon\,\bigr)=1\quad \forall \epsilon>0,m.
	\end{align}
\end{lemma}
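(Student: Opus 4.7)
The plan is to apply the fundamental theorem (\cref{thm:fundie2}) to rewrite the free energy cost in terms of conditional class probabilities. Writing each Pauli as $D_s L_l$ for some syndrome $s$ and logical label $l$, and setting $\Pr(s,l) := \Pr(\overline{D_s L_l})$, the identity $Z_E = \Pr(\overline E)$ gives $\Delta_m(E) = \beta_{\textrm{N}}^{-1}\log[\Pr(s,l)/\Pr(s,l+m)]$ whenever $E \in \overline{D_s L_l}$. Since the probability statements in the lemma are taken over $E$ drawn from the noise model, they are equivalently over $(s,l) \sim \Pr(s,l)$, and the lemma reduces entirely to concentration statements for the conditional ratios $\Pr(l|s)/\Pr(l+m|s)$.

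For the first biconditional I would start from the identity $\Pr(\text{MLD success}) = \sum_s \Pr(s) \max_l \Pr(l|s)$. Being below threshold means this sum tends to $1$; since $\max_l \Pr(l|s) \leq 1$, this forces $\max_l \Pr(l|s) \to 1$ for typical $s$, and moreover the sampled pair $(s,l)$ has $l = \argmax_{l'} \Pr(l'|s)$ with probability tending to $1$, so $\Pr(l|s) \to 1$ in probability over $(s,l)$. Then $\Pr(l+m|s) \leq 1 - \Pr(l|s) \to 0$ for any $m \neq 0$, giving $\Delta_m(E) \to \infty$ in probability. Conversely, if $\Delta_m(E) \to \infty$ in probability for every $m \neq 0$, a union bound over the finitely many values of $m$ shows that with probability tending to $1$ the sampled $E$ satisfies $\Pr(\overline E) > \Pr(\overline{EL_m})$ strictly for every $m \neq 0$, so that $\overline E$ is the unique maximum-likelihood class and MLD recovers $E$ up to stabilizer.

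For the second biconditional the key input is the elementary bound $\max_l \Pr(l|s) \geq 1/K$. Hence MLD success $\to 1/K$ pinches $\max_l \Pr(l|s) \to 1/K$ in probability over $s$; combined with $\sum_l \Pr(l|s) = 1$ this forces the conditional to be asymptotically uniform, $\Pr(l|s) \to 1/K$ for every $l$, so every ratio $\Pr(l|s)/\Pr(l+m|s) \to 1$, i.e., $\Delta_m(E) \to 0$ in probability. For the converse, assuming $\Delta_m(E) \to 0$ in probability for every $m$, a union bound over the $K$ logicals gives that with probability tending to $1$ every ratio lies within $e^{\pm \beta_{\textrm{N}} \epsilon}$ of $1$; combined with $\sum_l \Pr(l|s) = 1$ this pinches $\max_l \Pr(l|s)$ into $[e^{-\beta_{\textrm{N}}\epsilon}/K, e^{\beta_{\textrm{N}}\epsilon}/K]$, and sending $\epsilon \to 0$ after $n \to \infty$ yields $\Pr(\text{MLD success}) \to 1/K$.

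The main subtlety, and the reason this statement is genuinely stronger than the mean-divergence result of \cref{lem:trans=thresh_mean}, is that $\Delta_m(E)$ is unbounded above but bounded below, so convergence in mean and convergence in probability truly differ. The counterexample sketched just above the lemma---a rare, perfectly decodable syndrome embedded in an otherwise above-threshold code---exhibits $\Delta_m \to \infty$ in mean while $\Delta_m \to 0$ in probability, so no moment bound can suffice. Accordingly, the proof must be phrased as a concentration statement about the single-variable random quantity $\max_l \Pr(l|s)$ and transferred to $\Delta_m$ through the reduction in the first paragraph; all remaining steps are then finite union bounds and direct use of the normalisation $\sum_l \Pr(l|s) = 1$.
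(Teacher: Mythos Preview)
Your proposal is correct and follows essentially the same route as the paper: both rewrite $\Delta_m(E)=\beta_{\mathrm N}^{-1}\log[\Pr(l|s)/\Pr(l+m|s)]$, recast the below/above-threshold conditions as concentration statements for $\max_l\Pr(l|s)$ (equivalently $\Pr(l=0|s)$ in the MLD labelling), and pass between these and the $\Delta_m$ conditions using finite union bounds over $m$ together with the normalisation $\sum_l\Pr(l|s)=1$. The only cosmetic difference is that the paper packages each implication as an explicit parameter substitution (e.g.\ $t=\beta_{\mathrm N}^{-1}\log[(1-\gamma)/\gamma]$ and $\gamma=1-[1+(K-1)e^{-\beta_{\mathrm N}t}]^{-1}$), whereas you phrase the same inequalities as soft pinching arguments; the underlying estimates are identical.
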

\begin{proof}
	The success probability of the MLD corresponds to \mbox{$\Pr(l=0)$}. 
	We first note that $\Delta_m(E)$ can be written in terms of conditional probabilities,
	\begin{align}
	\Delta_m(E)=\frac{1}{\beta_{\textrm N}}\log \frac{\Pr(l|s)}{\Pr(l+m|s)}.
	\end{align}
	
	Next we recall that the maximum likelihood condition $\Pr(l=0|s)\geq \Pr(m|s)$ holds for all $s$ and $m$, and as such $\Pr(l=0)\geq 1/d^K$. 
	The code is below threshold if the probability of successful decoding approaches 1---this is equivalent to the probability of successful decoding \emph{conditioned upon the syndrome} approaching 1 almost surely, 
	\begin{align}
	\lim\limits_{n\to\infty}\Pr\Bigl[\Pr(l=0|s)\geq 1-\gamma \Bigr]=1\quad\forall \gamma>0.
	\end{align}
	To be clear, in this expression the inner probability is a probability over logicals $l$ and the outer probability is over syndromes. 
	Thus, the below threshold condition says informally that this distribution peaks at one in the limit. 
	This can be seen to be equivalent to being in the ordered phase,
	\begin{align}
		\lim\limits_{n\to\infty}
		\Pr\bigl( \Delta_m(E)\geq t \bigr)=1\quad\forall t,m\neq 0.
	\end{align}
	The forward and reverse directions are given by taking the choices of parameter
	\begin{subequations}
	\begin{align}
		t&:=\frac{1}{\beta_{\textrm N}}\log\frac{1-\gamma}\gamma,\\
		\gamma&:=1-\frac{1}{1+e^{-\beta_{\textrm N}t}(K-1)},
	\end{align}
	\end{subequations}
	respectively.
	
	In a similar vein, being above threshold implies that the conditional probability of decoding almost surely approaches $1/K$,
	\begin{align}
		\lim\limits_{n\to\infty} \Pr\Bigl[ \Pr(l=0|s) \leq 1/K+\eta \Bigr] = 1\quad\forall \eta>0.
	\end{align}
	This is equivalent to being in the disordered phase,
	\begin{align}
		\lim\limits_{n\to\infty}
		\Pr\bigl( |\Delta_m(E)|\leq \epsilon \bigr)=1\quad\forall \epsilon>0.
	\end{align}
	The forward and reverse directions here follow from the choices of parameter
	\begin{subequations}
	\begin{align}
		\epsilon:=\frac{1}{\beta_{\textrm{N}}}\log\frac{1+\eta K}{1-\eta K(K-1)},\\
		\eta:=\frac{1}{1+(K-1)e^{-\beta_{\textrm N}\epsilon}}-\frac{1}{K},
	\end{align}
	\end{subequations}
	respectively.
\end{proof}

We add more remark about the notions of above and below threshold that we have adopted here. 
As noted in the case of the surface code with independent $X$ and $Z$ noise with different strengths, there can be a gap between the cases of above and below threshold. 
In that case our theorem does not apply.
However, it is still the case that the steps of our proof could be followed to establish that there are a sequence of phase transitions in the disorder parameters $\Delta_m$ in the case where there are potentially differing thresholds for each logical subalgebras.

\subsection{Reentrance}

Above we have shown that, along the Nishimori line, there exists a phase transition which corresponds to the error correction threshold of the optimal decoder. 
By considering non-optimal decoders, we can extend this to show the equivalence between phase transitions and thresholds away from the Nishimori line.

By way of example, we consider the $\beta$-MFE decoders. 
We could now consider the phase diagram of our stat mech system in $(\theta,T)$-space, where $\theta$ was the parameter of our noise model, and $T=1/\beta$ is the temperature. 

\begin{lemma}[Reentrance of the ordered phase]
	For any $\beta$ such that the $\beta$-MFE decoder has a threshold, this threshold $\theta_t(\beta)$ is always less than or equal to that seen on the Nishimori line,
	\begin{align}
		\theta_t(\beta)\leq \theta_t(\beta_{\textrm N}).
	\end{align}
\end{lemma}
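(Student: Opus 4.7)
The plan is to reduce the statement to the operational optimality of the maximum-likelihood decoder, combined with the identity $\delta_{\beta_{\textrm N}\textrm{-MFE}} \equiv \delta_{\textrm{ML}}$ established in the previous subsection. The key fact is that, by definition, $\delta_{\textrm{ML}}(s)$ attains $\max_l \Pr(l|s)$ for each observed syndrome $s$, so that for every decoder $D$, every code size $n$, and every noise strength $\theta$,
\begin{align}
\Pr{}_{\!\theta}(D\text{ succeeds}) \;\leq\; \Pr{}_{\!\theta}(\delta_{\textrm{ML}}\text{ succeeds}).
\end{align}
This is an elementary consequence of the tower property of conditional expectation and requires no information about the stat mech model.

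Next, I would fix a $\beta$ at which the $\beta$-MFE decoder has a threshold $\theta_t(\beta)$, and pick an arbitrary $\theta < \theta_t(\beta)$. By the very definition of threshold, $\Pr_{\!\theta}(\delta_{\beta\textrm{-MFE}}\text{ succeeds}) \to 1$ as $n\to\infty$. The pointwise dominance above then forces $\Pr_{\!\theta}(\delta_{\textrm{ML}}\text{ succeeds}) \to 1$, which by the identity $\delta_{\beta_{\textrm N}\textrm{-MFE}} \equiv \delta_{\textrm{ML}}$ means the $\beta_{\textrm N}$-MFE decoder also succeeds asymptotically at this $\theta$. Since above the Nishimori threshold the optimal success probability instead tends to $1/K<1$, it must be that $\theta \leq \theta_t(\beta_{\textrm N})$. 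Taking the supremum over $\theta < \theta_t(\beta)$ yields the claim $\theta_t(\beta) \leq \theta_t(\beta_{\textrm N})$.

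The main technical subtlety is that the argument implicitly treats $\theta_t(\beta_{\textrm N})$ as well-defined in the strong sense given in the text, namely with no intermediate regime. For code families that fail this property---for example the surface code with sufficiently asymmetric independent $X$/$Z$ noise, as alluded to earlier---one should reinterpret $\theta_t(\beta_{\textrm N})$ as the supremum of noise strengths at which the ML decoder succeeds with probability tending to one. The bound survives this reading intact, since we only ever use the implication ``ML succeeds asymptotically $\implies \theta \leq \theta_t(\beta_{\textrm N})$'' and never its converse. A pleasant structural feature of the proof is that it requires no detailed information about the stat mech model beyond the identification $\delta_{\beta_{\textrm N}\textrm{-MFE}} \equiv \delta_{\textrm{ML}}$; the remainder is purely a statement about decoders.
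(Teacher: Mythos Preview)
Your proof is correct and follows essentially the same route as the paper: both arguments reduce the inequality to the operational optimality of the ML decoder together with the identification $\delta_{\beta_{\textrm N}\textrm{-MFE}}\equiv\delta_{\textrm{ML}}$. The paper's proof additionally remarks that \cref{lem:trans=thresh_mean,lem:trans=thresh_prob} extend off the Nishimori line to identify $\theta_t(\beta)$ with a genuine phase transition at temperature $1/\beta$ (this is what justifies the name ``reentrance of the ordered phase''), but that observation is not needed to establish the bare inequality you were asked to prove, and your treatment of the threshold-existence subtlety is if anything more careful than the paper's.
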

\begin{proof}
	As the $\beta$-MFE decoder is directly defined in terms of minimising the free energy, \cref{lem:trans=thresh_mean,lem:trans=thresh_prob} both naturally extend to show that the stat mech model has a phase transition at inverse temperature $\beta$ and $\theta=\theta_t(\beta)$. 
	
	As the ML decoder is optimal, it necessarily has the highest threshold of any decoder. Recalling that the ML decoder is precisely the $\beta_{\textrm N}$-MFE decoder, where $\beta_{\textrm N}$ is the inverse Nishimori temperature, this implies that \mbox{$\theta_t(\beta)\leq \theta_t(\beta_{\textrm N})$} for all $\beta$. 
\end{proof}

\begin{figure}
	\includegraphics{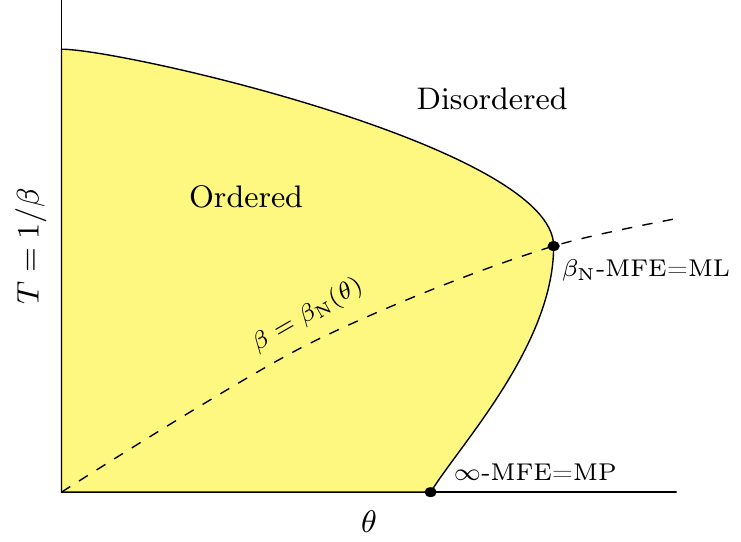}
	
	\caption{A sketch of the generic phase diagram for the stat mech model. The solid line indicates the phase boundary, and the shaded region the phase in which error correction is possible. The dashed line indicates the Nishimori condition, along which the MFE decoder corresponds to the ML decoder. The thresholds of the MP and ML decoder are both indicated by black dots. An explicit example, for the case of bit-flip noise in the toric code, is presented in \cref{fig:montecarlo}.}
	\label{fig:phase_sketch}
\end{figure}


\section{Tensor network maximum likelihood decoding}
\label{sec:mld}

In \cref{sec:decoding} we showed that the problem of maximum likelihood decoding can be reduced to calculating partition functions/free energy differences. Specifically it reduces to calculating 
\begin{align}
	\delta_{\textrm{ML}}(s):=\argmax_l  Z_{C_sL_l}(\beta_{\textrm N}),
\end{align}
where $\beta_{\textrm N}$ is the inverse Nishimori temperature. This relationship was first discussed in Ref.~\cite{Dennis2001}.

The problem of calculating the partition function of a stat mech model can be naturally expressed as the contraction of a tensor network~\cite{Verstraete2006,Bridgeman2017}, as we will review below.
This suggests a general method for approximating the ML decoder by finding a tractable (possibly approximate) contraction sequence for the tensor network associated to the partition function. 

The idea of a such a \emph{tensor network decoder} is not new. 
In Ref.~\cite{Bravyi2014}, Bravyi, Suchara and Vargo (BSV) consider an explicit tensor network for the ML decoder for the surface code with independent Pauli noise, and numerically study the threshold for some parameter choices. 
It is clear from their paper that this method generalises to other codes with independent Pauli noise models~\cite{Tuckett2017}, though the specific methods they use for (approximately) contracting the tensor network would need to be modified if the code were not a planar code. 
Other authors have also considered tensor network decoders which can account for non-Pauli noise~\cite{Darmawan2017,Darmawan2018}. 

In fact, the tensor network considered by BSV~\cite{Bravyi2014} is exactly the natural tensor network that one obtains from the partition function using the mapping of Refs.~\cite{Verstraete2006,Bridgeman2017}. 
Thus, the tensor network decoders that we define below are generalisations of the BSV decoder that, by virtue of \cref{thm:fundie2}, apply even in the case of correlated noise models on arbitrary codes. 
If the code has low-weight stabiliser (or gauge) generators, then storing these tensors is efficient. 
Contracting a general tensor network is unfortunately \#P-hard~\cite{AradLandau2008,SchuchWolfVerstraeteCirac2007}, however, we will discuss efficient approximate contraction strategies that could be employed in the interesting cases of spatially local codes.

\subsection{Tensor network algorithm for approximate maximum likelihood decoding}

To see how calculating the partition function of a stat mech model can be naturally expressed as the contraction of a tensor network, we begin with an arbitrary Hamiltonian
\begin{align}
	h(\vec s)=\sum_{j} h_j(\vec{s}),
\end{align}
with a corresponding partition function 
\begin{align}
	Z=\sum_{s} e^{-\beta h(\vec s)}=\sum_{s}\prod_{j} e^{-\beta h_j(\vec s)}.
\end{align}
This expression almost takes the form of a tensor contraction. To make this into a tensor network, consider indices $\lbrace \alpha_{i,j}\rbrace_{i,j}$, labelled by both spins and interactions. 
We want to require that $\alpha_{i,j}=s_i$ for all $j$, which we can impose by including Kronecker delta factors. 
In terms of these indices, the partition function can be expressed as
\begin{align}
	Z=\sum_{\alpha} 
	\left[
	\prod_i \delta\left(\lbrace \alpha_{i,j}\rbrace_j\right)
	\prod_j e^{-\beta h_j\left(\lbrace \alpha_{i,j}\rbrace_i\right)}
	\right].
\end{align}
The virtue of this expression is that each $\lbrace \alpha_{i,j}\rbrace_{i,j}$ now occurs precisely twice in each term above, and as such the above expression forms a tensor network. 
Obviously, for local Hamiltonians in which $h_j$ only depends non-trivially on a finite set of sites $i$, we can drop any index $\alpha_{i,j}$ for which interaction $h_j$ is independent of $s_i$. 
After doing so, the tensor network simplifies to a network which has the same connectivity as the underlying Hamiltonian. 
Specifically, there is a tensor corresponding to each site ($\delta$) and each bond (Gibbs weight), with the natural connectivity. 
In \cref{fig:mldtoric} we show the corresponding diagram for the surface code under iid noise, which reduces to that found in Ref.~\cite{Bravyi2014}.

\begin{figure}[t]
	\centering
	\includegraphics{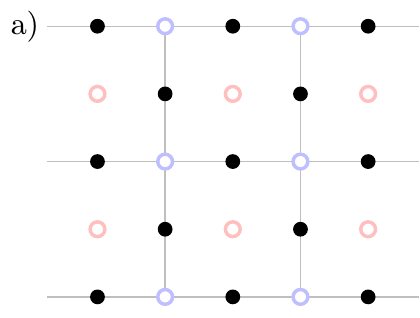}\includegraphics{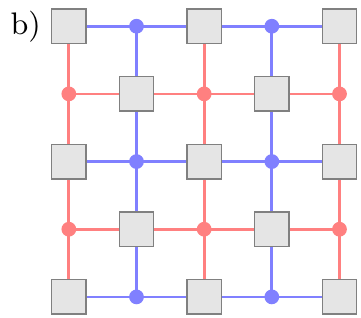}
	
	\includegraphics{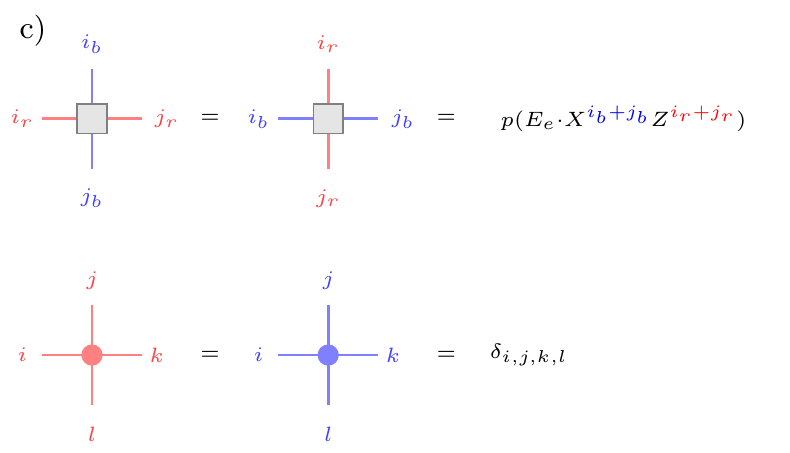}
	
	~\\[-.4cm]
	\caption{Tensor network ML decoder of the surface code under iid noise. a)~surface code and the associated statistical mechanical model. 
	Black circles indicate qubits, and the red/blue open circles the $Z$/$X$-type spins. 
	See \cref{fig:iidtoric} for the interactions within this model. 
	b)~Tensor network for the probability of an error class $\Pr(\overline E)$, or equivalently for the partition function $Z_E$, for iid noise distributed according to $p:\mathcal{P}\to\mathbb R$. 
	We note that this matches Fig.~6 of Ref.~\cite{Bravyi2014}. 
	c)~Values of the two tensors involved. 
	Red/blue dots are $\delta$ tensors, corresponding to the $Z$/$X$-type spins, and grey tensors correspond to Gibbs weights. 
	Once again this matches the tensors in Equations.~(39--41) of Ref.~\cite{Bravyi2014}.}
	\label{fig:mldtoric}
\end{figure}

Applying this construction to the stat mech mappings of \cref{sec:mapping,sec:correlated,sec:circuit}, we see that contracting this tensor network gives an algorithm to compute the ML decoding. 
Because contracting tensor networks is generally hard, we will also need \emph{approximate} tensor contraction schemes that provide parameterised approximations. 

In the case of 2D topological codes under spatially correlated noise, three approximate contraction schemes suggest themselves:
\begin{itemize}
	\item MPS-MPO contraction~\cite{Schollwock2011,VerstraeteCirac2004}. For the toric code, this method exactly reproduces the MPS decoder of Ref.~\cite{Bravyi2014} (see \cref{fig:mldtoric}).
	\item Transfer matrix~\cite{Onsager1944} and corner transfer matrix methods~\cite{Baxter1981, Baxter2007}. 
	\item Methods that involve renormalisation on the virtual level, e.g.\ tensor network renormalisation (TNR)~\cite{Evenbly2015,Bal2017} or (higher-order) tensor renormalisation group (HOTRG)~\cite{Levin2007, Xie2012}. These examples remain efficient when extended to more than two dimensions.
\end{itemize}
Each of these methods contain an approximation parameter in the form of the bond dimension. 
For a fixed bond dimension, these methods all provide polynomial time approximations to the ML decoder, though we do not know of any general results which control the approximation as a function of the bond dimension in any nontrivial way. 
Actually implementing these decoders (beyond what was done in Refs.~\cite{Bravyi2014, Tuckett2018}) or finding provable guarantees on the approximation to the ML decoder both remain open problems.

\section{Conclusions}
\label{sec:conclusions}

In this paper we have provided a broad extension of the stat mech mapping of Ref.~\cite{Dennis2001} to consider arbitrary stabiliser and subsystem codes, subject to correlated noise, including circuit-based noise. 
This class of noise models encompasses noise where distant spins are only conditionally independent, and allows for much more realistic noise modelling. 
As an application, we applied Monte Carlo simulation to this construction to show how positive correlations can push down the bit-flip threshold of the toric code. 
Finally, we showed how the stat mech mapping gives a natural family of efficient tensor network algorithms which approximate maximum likelihood decoding, generalising the decoder of Ref.~\cite{Bravyi2014}.

There are several natural avenues for further inquiry. 
For example, it is clear that our method should apply to the setting of continuous variables quantum codes such as GKP codes~\cite{Gottesman2000}, however there are analytical issues that must be addressed owing to the infinite dimensional Hilbert space. 
It would be interesting to understand these conditions in detail as these codes continue to gain experimental relevance~\cite{Ofek2016}. 

Even more interesting would be a formalism for deriving stat mech models that can handle non-Pauli errors such as amplitude damping or coherent errors, or for quantum codes that are outside the stabiliser code formalism such as commuting projector codes. 
Since the errors in the most general such models do not have an abelian action on the codewords, this raises the possibility that one would need a \emph{quantum} stat mech model to accurately capture the threshold in these cases. 
An interesting test case would be to extend our work to the semion code model of Ref.~\cite{Dauphinais2018}, since the anyons in the double semion model are abelian, but the check operators are not simple products of Paulis.

\section*{Acknowledgements}

We thank Michael Beverland, Jacob Bridgeman, Benjamin Brown, Marius Junge and Thomas Smith for helpful discussions. 
Special thanks go to Robin Harper for assistance with the numerics. 
The authors acknowledge the University of Sydney HPC service at The University of Sydney for providing HPC resources that have contributed to the research results reported within this paper. 
This work was initiated during the thematic semester on Analysis in Quantum Information Theory at the Institut Henri Poincar\'{e}, and the authors thank the Institut and the organisers for their hospitality during that time.

This research was supported by the Australian Research Council Centre of Excellence for Engineered Quantum Systems (Project ID CE170100009). 
CTC also acknowledges support Sydney Nano Postgraduate Scholarship (John Makepeace Bennett Gift).

\appendix


\section{Graphical models}

\label{app:graphical}

An important family of correlated probability distributions is that of \emph{probabilistic graphical models}~\cite{Sucar2015}. In this appendix we briefly review the relationship between certain classes of graphical models and the factored distributions we consider in \cref{sec:correlated}. Specifically we will see that both undirected graphical models (Markov random fields) and directed acyclic graphical models (Bayesian networks) are both special examples of factored distributions. As such, the statistical mechanical mapping given in \cref{sec:correlated} also naturally extends to noise models of these forms as well.

\subsection{Markov random field}

We start by considering undirected graphical models, known as Markov random fields.

\begin{definition}[Markov random field]
	A distribution $\Pr(E)=\Pr(E_1,\dots,E_n)$ is a \emph{Markov random field} (MRF) with respect to an undirected graph $G$, if it satisfies the local Markov condition
	\begin{align}
	\Pr(E_i|E_{G\setminus i})=\Pr(E_i|E_{\partial i}),
	\end{align}
	for all $i$, where $\partial i$ denotes the neighbours of $i$ within $G$.
\end{definition}

For such a graphical model, Hammersley-Clifford theorem tells us that it can be expressed as a factored distribution.

\begin{lemma}[Hammersley-Clifford theorem]
	A strictly positive distribution which is a MRF with respect to $G$, can be factored over the set of all maximal cliques of $G$.
\end{lemma}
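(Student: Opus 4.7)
The plan is to follow the classical argument due to Hammersley and Clifford based on Möbius inversion on the Boolean lattice of vertex subsets. I would begin by fixing an arbitrary reference configuration, say $E^\ast$, which is possible because strict positivity ensures $\Pr(E^\ast) > 0$. Define the log-ratio
\begin{align}
H(E) := \log \frac{\Pr(E)}{\Pr(E^\ast)},
\end{align}
and for each subset $S \subseteq V$ of vertices, define a candidate potential $\phi_S(E)$ by Möbius inversion on subsets,
\begin{align}
\phi_S(E) := \sum_{T \subseteq S} (-1)^{|S \setminus T|}\, H(E^{(T)}),
\end{align}
where $E^{(T)}$ agrees with $E$ on $T$ and with $E^\ast$ on $V \setminus T$. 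By construction, $\phi_S(E)$ depends only on the restriction $E_S$, and it vanishes whenever any coordinate $E_i$ with $i \in S$ equals the reference value $E^\ast_i$. Inverting the Möbius sum then gives the decomposition $H(E) = \sum_{S \subseteq V} \phi_S(E)$.

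The heart of the proof is to show that $\phi_S \equiv 0$ whenever $S$ is not a clique of $G$. The strategy is as follows: pick $S$ non-clique and fix two non-adjacent vertices $i, j \in S$. Using the definition of $\phi_S$, I would split the sum over $T \subseteq S$ according to whether $T$ contains $i$, $j$, both, or neither, and pair up terms to express $\phi_S(E)$ as an alternating sum over subsets $T' \subseteq S \setminus \{i,j\}$ of an expression of the form
\begin{align}
\log \frac{\Pr(E_i, E_j \mid E_{T'}, E^\ast_{V \setminus (T' \cup \{i,j\})})}{\Pr(E^\ast_i, E_j \mid \cdots)\,\Pr(E_i, E^\ast_j \mid \cdots) / \Pr(E^\ast_i, E^\ast_j \mid \cdots)}.
\end{align}
The local Markov condition, applied at $i$ (whose neighbours exclude $j$), gives that the conditional distribution of $E_i$ does not depend on $E_j$ given the rest, and symmetrically for $j$. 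This forces each such log-ratio to vanish, hence $\phi_S \equiv 0$.

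With non-clique potentials eliminated, we have $H(E) = \sum_{C \text{ clique}} \phi_C(E)$, and exponentiating,
\begin{align}
\Pr(E) = \Pr(E^\ast) \prod_{C \text{ clique}} e^{\phi_C(E)}.
\end{align}
Finally, I would absorb each $e^{\phi_C}$ into a factor $\tilde{\phi}_M$ associated with any maximal clique $M$ containing $C$ (distributing the constant $\Pr(E^\ast)$ arbitrarily, for instance into one chosen maximal clique factor). This produces the required factorisation over maximal cliques with strictly positive factors.

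The main obstacle will be the combinatorial manipulation in the second step: rewriting $\phi_S$ as a telescoping alternating sum and pinpointing exactly how the Markov property at a single non-edge kills every term uniformly across all configurations. This step is where strict positivity is essential, since it licenses both the use of logarithms and the conditional-probability rearrangements; without it, one must appeal to refinements such as Moussouris' counterexample and more careful hypotheses. I would not anticipate difficulty in the surrounding structural steps, which are formal once the vanishing lemma is in hand.
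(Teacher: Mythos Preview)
Your proposal is correct and follows essentially the same route as the paper's proof sketch: both are the classical Hammersley--Clifford argument via M\"obius inversion over the Boolean lattice of vertex subsets, followed by the vanishing-on-non-cliques lemma and absorption into maximal cliques. The only cosmetic difference is that the paper works multiplicatively, defining $\phi_Y(E_Y):=\prod_{X\subseteq Y}\Pr(E_X\otimes I_{G\setminus X})^{(-1)^{|Y\setminus X|}}$ with the identity $I$ as reference and showing $\phi_Y\equiv 1$ on non-cliques, whereas you work additively in log-space with an arbitrary reference $E^\ast$ and show $\phi_S\equiv 0$; you also spell out the pairing-over-a-non-edge step that the paper leaves implicit.
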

\begin{proof}[Proof sketch]
	For any subset $Y\subseteq G$, let
	\begin{align}
		\phi_Y(E_Y):=\prod_{X:X\subseteq Y}\Pr(E_X\otimes I_{G\setminus X})^{(-1)^{|Y\setminus X|}},
	\end{align}
	where $I$ was chosen as an arbitrary, but fixed, reference value of $E$.
	Noting that $\mu(X,Y)=(-1)^{|Y\setminus X|}$ forms a M\"obius function, we can apply M\"obius inversion theorem to give
	\begin{align}
		\Pr\left(E_Z\otimes I_{G\setminus Z}\right)=\prod_{Y\subseteq Z}\phi_Y(E_Y),
	\end{align}
	and more specifically
	\begin{align}
		\Pr\left(E\right)=\prod_{Y\subseteq G}\phi_Y(E_Y).
	\end{align}
	
	Applying the local Markov property, we find that \mbox{$\phi_Y\equiv 1$} for any $Y$ which is \emph{not} a clique.
	As such we find that
	\begin{align}
		\Pr(E)=\prod_j \phi_{C_j}(E_{C_j}),
	\end{align}
	where $\lbrace C_j\rbrace_j$ are the set of \emph{all} cliques in $G$. By grouping together factors, this can straightforwardly reduced to a factorisation over just \emph{maximal} cliques.
\end{proof}
\subsection{Bayesian Networks}
\label{app:bn}

We now consider directed cyclic graphical models, known as Bayesian Networks.

\begin{definition}[Bayesian network]
	A distribution $\Pr(E)=\Pr(E_1,\dots,E_n)$ is a \emph{Bayesian network} (BN) with respect to an directed acyclic graph $G$, if it satisfies the local Markov condition
	\begin{align}
		\Pr(E_i|E_{G\setminus \partial^-i})=\Pr(E_i|E_{\partial^+i}),
	\end{align}
	for all $i$, where $\partial^- i$ and $\partial^+i$ denote the descendants and parents of $i$ within $G$ respectively.
\end{definition}

As with Markov random fields, we see that Bayesian networks also naturally factorise.

\begin{lemma}[Factorisation of Bayesian Networks]
	A distribution $\Pr(\cdot)$ which is a Bayesian network with respect to a directed acyclic graph $G$, can be factored over $\lbrace \lbrace i\rbrace\cup \partial^+i \rbrace_i$. Specifically,
	\begin{align}
		\Pr(E)=\prod_{i=1}^n \Pr(E_i|E_{\partial ^+i}).
	\end{align}
\end{lemma}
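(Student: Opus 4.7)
The plan is to exploit the acyclicity of $G$ to reduce the factorisation to an iterated application of the chain rule, and then to use the local Markov property to prune each conditional down to the parent set.

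First I would choose a topological ordering of the vertices, which exists because $G$ is a directed acyclic graph. Without loss of generality I may assume the labelling $1,2,\dots,n$ is itself such an ordering, so that if $j\in\partial^+i$ then $j<i$. In particular, for each $i$ the set of parents $\partial^+i$ is contained in $\{1,\dots,i-1\}$, and conversely the descendants $\partial^-i$ lie entirely in $\{i+1,\dots,n\}$ (or are disjoint from $\{1,\dots,i-1\}$).

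Next, I would apply the chain rule of probability with respect to this ordering to write
\begin{align}
\Pr(E)=\prod_{i=1}^n \Pr\bigl(E_i\,\big|\,E_1,\dots,E_{i-1}\bigr).
\end{align}
The key observation is that the conditioning set $\{1,\dots,i-1\}$ consists entirely of non-descendants of $i$, so it is a subset of $G\setminus\partial^- i$ that contains $\partial^+ i$. The local Markov condition, together with a standard argument that conditional independence from the full non-descendant set implies conditional independence from any subset containing the parents, then yields
\begin{align}
\Pr\bigl(E_i\,\big|\,E_1,\dots,E_{i-1}\bigr)=\Pr\bigl(E_i\,\big|\,E_{\partial^+ i}\bigr).
\end{align}
Substituting this into the chain-rule expansion gives the claimed factorisation. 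Identifying the factor $\phi_i(E_{\{i\}\cup \partial^+i}):=\Pr(E_i\mid E_{\partial^+i})$ shows that the distribution factors over the regions $\{\{i\}\cup\partial^+i\}_i$ in the sense of \cref{def:factored}.

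The main obstacle, such as it is, is the intermediate step of upgrading the local Markov condition (which conditions on all non-descendants) to the statement that conditioning on the prefix $\{1,\dots,i-1\}$ suffices. This is a routine application of the semi-graphoid axioms (specifically, weak union and decomposition): from $E_i \perp E_{G\setminus \partial^-i \setminus \partial^+i \setminus\{i\}}\mid E_{\partial^+i}$ one deduces conditional independence from any intermediate set sandwiched between $\partial^+i$ and the full non-descendant set. I would state this as a brief sub-lemma rather than verify it at the level of measure-theoretic conditional expectations, since it is classical.
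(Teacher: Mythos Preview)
Your proposal is correct and follows essentially the same approach as the paper: pick a topological order, expand by the chain rule, then invoke the local Markov condition to reduce each conditional to the parent set. The only differences are cosmetic---your convention places parents before children (the paper's convention is the reverse, so its chain rule conditions on $E_{i+1},\dots,E_n$ instead of $E_1,\dots,E_{i-1}$), and you are more explicit than the paper about the intermediate step where the local Markov condition (stated for conditioning on \emph{all} non-descendants) must be weakened to conditioning on just the prefix; the paper simply asserts this step.
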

\begin{proof}
	As our directed graph $G$ is acyclic, it possesses a topological order. Consider labelling our indices according to that ordering, such that $\partial^+i> i> \partial^-i$ element-wise. Next we consider expanding our joint distribution using the chain rule of conditional probabilities,
	\begin{align}
		\Pr(E_1,\dots,E_n)=\prod_{i=1}^{n} \Pr(E_i|E_{i+1},\dots,E_{n}).
	\end{align}
	Applying the local Markov condition, we get the desired factorisation,
	\begin{align}
	\Pr(E)=\prod_{i=1}^n \Pr(E_i|E_{\partial ^+i}).
	\end{align}
\end{proof}


\section{Numerical simulation details}
\label{app:simulation}

In this appendix we cover the details of how the statistical mechanical simulations of \cref{sec:numerics} were performed, specifically how the data presented in \cref{fig:montecarlo} was collected.

The simulations we present follow closely the techniques used in Ref.~\cite{Bombin2012} to study the toric code under a depolarising noise model.

\subsection{Order parameter}

As discussed in \cref{sec:numerics}, the system we are studying corresponds to a random-bond Ising model, with an addition four-body coupling. In the case of zero disorder, $p=0$, this system simply limits to the standard square-lattice Ising model. Noticing this, a convenient order parameter for this system is simply given by the average magnetisation,
\begin{align}
	m:=\frac{1}{L^2}\sum_is_i,
\end{align}
where $L$ is the linear size of our system, such that we have $L^2$ sites.

\subsection{Finite-size scaling}

\begin{figure}
	\centering
	
	
	\includegraphics{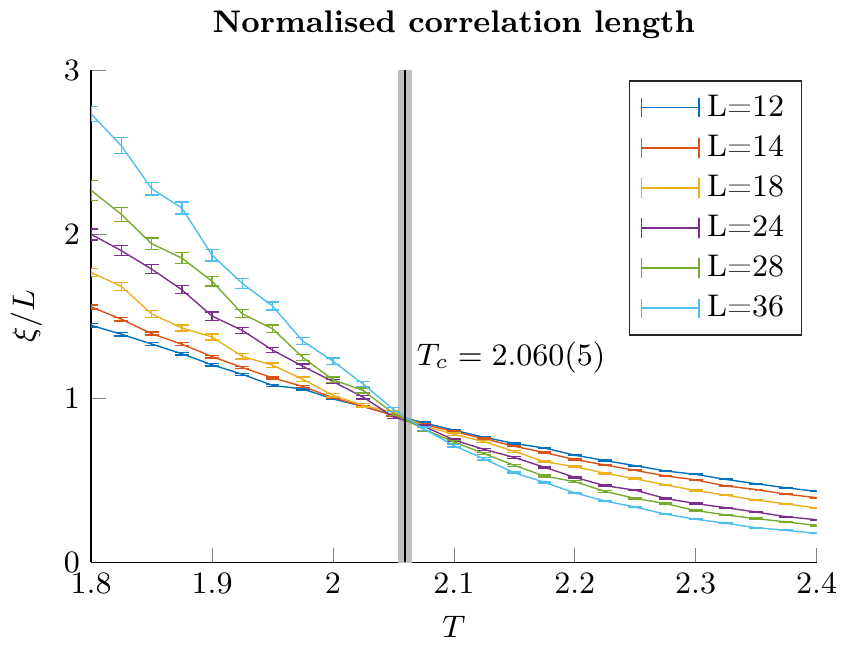}
	\includegraphics{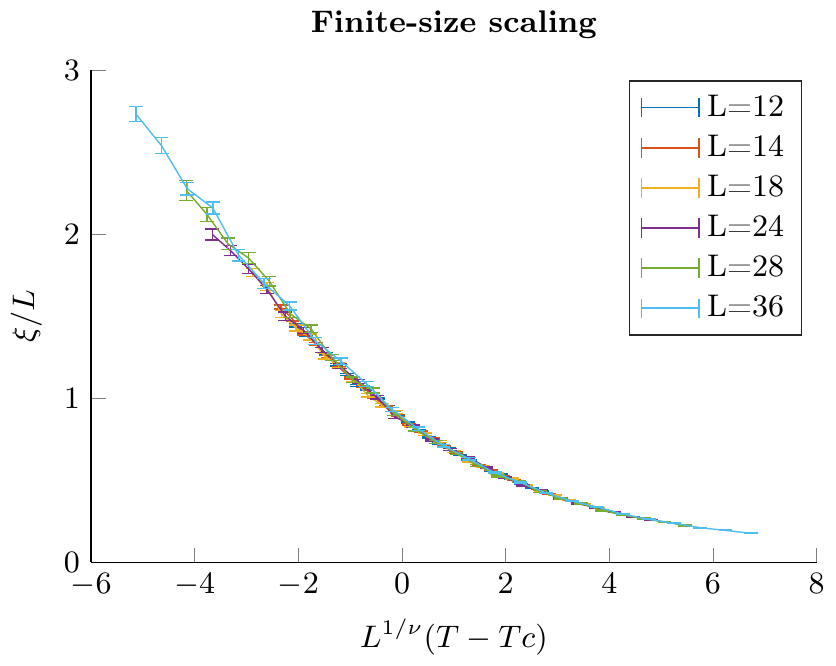}
	\caption{Crossing diagrams for \mbox{$p=6\%$}. a)~Normalised correlation length versus temperature, for several system sizes, with critical temperature indicated. b)~Finite-size scaling of normalised correlation length. All error bars indicated are 95\% confidence intervals given by bootstrapping.}
	\label{fig:crossing}
\end{figure}

For very large system sizes, we could sample the magnetisation directly, and find the phase transition by detecting when the system spontaneously magnetises. This na\"ive method would however require prohibitively large system sizes, especially for large disorder. Instead we will utilise this order parameter indirectly, by considering the finite-size scaling of the corresponding correlation length.

Following Ref.~\cite{Bombin2012}, we define the wave-vector-dependent magnetic susceptibility
\begin{align}
	\chi(\vec k):=\frac{1}{L^2}\left\langle \left|\sum_{i}s_ie^{i\vec{k}\cdot\vec{r_i}}\right|^2  \right\rangle,
\end{align} 
where $\vec{r_i}$ denotes the spatial position of site $i$, and $\langle\cdot\rangle$ denotes the thermal and disorder average,
\begin{align}
	\langle X\rangle := \sum_{E}\Pr(E)\cdot\sum_{s}\frac{e^{-\beta H_{E}(\vec s)}}{Z_E} X(\vec s).
\end{align}
Using this, we can now define the two-point finite-size correlation function
\begin{align}
	\xi:=\frac{1}{2\sin(k_{\min}/2)}\sqrt{\frac{\chi(\vec 0)}{\chi(\vec{k}_{\min})}-1},
\end{align}
where $\vec k_{\min}=(2\pi/L,0)$ is the minimal non-zero wave-vector. Near the phase transition this correlation length is expected to have the finite-size scaling~\cite{Palassini1999}
\begin{align}
	\label{eqn:fss}
	\xi/L\approx f\left[L^{1/\nu}(T-T_{\mathrm c})\right],
\end{align}
where $f$ is a dimensionless scaling function. 

At the critical temperature $T=T_{\mathrm c}$, the normalised correlation length $\xi/L$ becomes independent of temperature. We will determine the temperature of the phase transition by plotting $\xi/L$ as a function of $T$ for several different system sizes $L$, and fitting to \cref{eqn:fss}. If these curves do not cross, then this will be taken as indication that no phase transition is present.

\subsection{Simulation parameters}

\setlength\tabcolsep{.25cm}
\def\arraystretch{1.5}
\begin{table}[h!]
	\label{tbl:equil}
	\begin{tabular}{|c|c|ccc|}
		\hline
		$p$ (\%)     & $T$ & $\tau_{14}$ & $\tau_{24}$ & $\tau_{36}$ \\ \hhline{|=|=|===|}
		2.00         & 2.00       -- 2.50       & 14          & 16          & 18          \\
		4.00         & 2.00       -- 2.50       & 14          & 16          & 18          \\
		6.00         & 1.80       -- 2.40       & 14          & 16          & 18          \\
		8.00         & 1.60       -- 2.30       & 16          & 18          & 22          \\
		8.50         & 1.50       -- 2.10       & 17          & 21          & 23          \\
		9.00 -- 9.10 & 1.40       -- 2.00       & 18          & 22          & 24          \\
		9.20 -- 9.50 & 1.30       -- 2.00       & 18          & 22          & 25          \\
		\,~9.55 -- 10.20 & 1.25       -- 2.00       & 19          & 23          & 26          \\
		\,10.30 -- 10.50 & 1.25       -- 2.00       & 20          & 24          & 27          \\ \hline
	\end{tabular}
	\caption{Temperature ranges and equilibration times. An equilibration time of $\tau$ corresponds to $2^{\tau}$ Metropolis-Hastings steps (a number of single spin updates equal to the system size) before the correlation lengths recorded in the three last logarithmic bins are statistically consistent.}
\end{table}

We simulate our system for system sizes of \mbox{$L\in\lbrace 12,14,18,24,28,36\rbrace$}. Equilibration of our system is found by logarithmically binning the correlation length, and requiring that three successive bins agree to within error bars. To save time, the equilibration testing was only done on \mbox{$L=14,24,36$}, and these times were reused for \mbox{$L=12,18,28$}. The times and temperature ranges are given in \cref{tbl:equil}.

For $L=12,14$ the correlation length is recorded for 5000 disorder samples, 1000 samples are taken for $L=18,24$, and 500 for $L=28,36$. The temperature ranges are swept over with a resolution of $\Delta T=0.025$. Error bars are generated for all the data is given by the 95\% confidence intervals given by bootstrapping, using 10,000 resamples. An example crossing plot is given for $p=6\%$ in \cref{fig:crossing}. By performing this analysis for each $p$, we can estimate the phase boundary, as seen in \cref{fig:montecarlo}. All the simulations took a total of $1.1\times 10^6$ CPU-hours, or $120$ CPU-years.

\section{Overlapping independent errors}
\label{app:overlapping}

In this appendix we will consider a correlated noise model which, at least superficially, appears to have a similar structure to the factored noise considered in \cref{sec:correlated}. We will show that a statistical mechanical mapping for these noise models can be given by considering an appropriately enlarged code, and applying the mapping given for independent noise in \cref{sec:mapping}.

Consider a model which involves independent noise processes acting on overlapping regions, such that the overall noise is given by the product of these constituent errors. Specifically we have overlapping regions $\lbrace R_j\rbrace$, each subject to independent noise given by distributions $\lbrace p_j:\mathcal P_{R_j}\to\mathbb R^+\rbrace$. For each error $E_j\sim p_j$, the overall error is simply $E=\prod_j E_j$. When these regions are non-overlapping each $E_j$ simply corresponds to the restriction $\left.E\right|_{R_j}$, but for overlapping regions multiple constituent errors can give rise to the same overall error, e.g. both pairs of errors $\lbrace XYI,IIZ\rbrace$ and $\lbrace XII,IYZ\rbrace$ give the overall error $E=XYZ$. Due to this redundancy, these noise models take the more complicated form
\begin{align}
	\Pr(E) = \sum_{\substack{\lbrace E_j\rbrace:\,\prod_jE_j=E}} \prod_j p_j(E_j).
\end{align}

The approach here is similar to a stripped-down version of the history code construction considered in \cref{sec:circuit}. The idea here is to consider a code now acting on multiple copies of the original lattice, which we will refer to as the \emph{enlarged code}. This added redundancy will allow previously overlapping errors to act on disjoint copies, giving a truly independent noise model. This added redundancy will in turn be compensated for by introducing gauge generators between the layers.

\subsubsection{Enlarged code}

\begin{figure}
	\centering
	\includegraphics{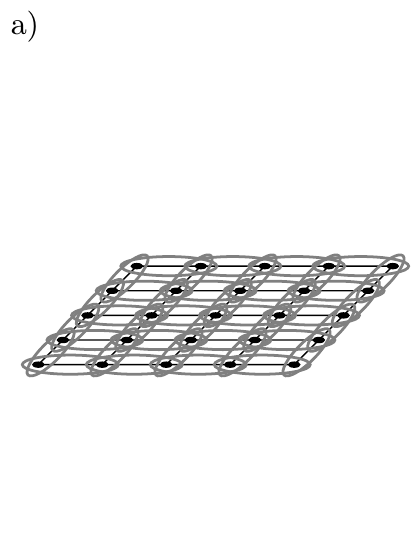}
	\includegraphics{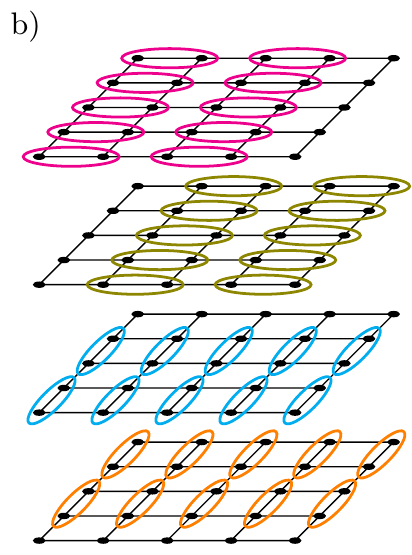}
	\caption{Overlapping-independent errors effecting nearest neighbour pairs on a square lattice, such as the 
		`NN-depolarizing' model considered in Ref.~\cite{NN-depolarising}. a) Overlapping regions $R_j$ on the original code. b) Non-overlapping regions $\tilde R_j$ on $C=4$ copies of the original code which compose the enlarged code.}
	\label{fig:overlapping}
\end{figure}

Specifically, consider $C$ copies of our code, such that each independent noise process can act disjointly on precisely one copy. Specifically we will consider assigning each  $R_j$ to one of these copies; we will denote by $\lbrace \tilde R_j\rbrace_j$ these now-non-overlapping regions in the enlarged code (see \cref{fig:overlapping}). If we denote the error on each copy $c$ of our code by $E^{(c)}$, then the corresponding error on the original and enlarged codes are given by $E=\prod_c E^{(c)}$ and $\tilde E:=\otimes_c E^{(c)}$ respectively. The enlarged error $\tilde E$ can be considered a specific manifestation of the overall error $E$. 

Due to the non-overlapping nature of $\lbrace \tilde R_j\rbrace_j$, the enlarged errors $\tilde E$ are independently distributed,
\begin{align}
	\Pr(\tilde E)=\prod _j p_j\left( \!\left.\tilde E\right|_{\tilde R_j} \right).
\end{align}
This added redundancy means that the relationship between the errors in the original code $E$ and the enlarged code $\tilde E$ is, as noted earlier, one-to-many. To account for this, we need to introduce gauge generators acting between copies,
\begin{align}
\lbrace \tilde G_l\rbrace_l := \lbrace X_{i,c}\otimes X_{i,c+1}^{-1}, Z_{i,c}\otimes Z_{i,c+1}^{-1}   \rbrace_{i,c},
\end{align}
where $P_{i,c}$ denotes Pauli $P$ acting on the $i$th site in the $c$th copy of our code. By construction, the set of all enlarged errors $\tilde E$ which correspond to the same overall error $E$ is given by cosets of $\langle G_l\rangle_l$.

Next we define our stabilisers in the enlarged code, $\tilde S_k$, to simply be the original stabilisers $S_k$ acting on the the first copy of our code,
\begin{align}
	\tilde S_k:=S_k\otimes I\otimes \cdots \otimes I.
\end{align}
We now define the corresponding stat mech mapping analogous to that in \cref{sec:mapping}, as
\begin{align}
H_{\tilde E}(\vec c,\vec g)=-\!\!\sum_{j,\sigma\in\mathcal{P}_{\tilde R_j}} \!\!J_j(\sigma) \comm{\sigma}{\tilde E\prod_{k}\tilde S_k^{c_k}\prod_{l}\tilde G_l^{g_l}}\!\!,
\end{align}
with corresponding Nishimori condition
\begin{align}
\beta J_j(\sigma)=\frac{1}{|\mathcal P_{\tilde R_j}|}\sum_{\tau\in\mathcal P_{\tilde R_j}}\log p_j(\tau)\comm{\sigma}{\tau^{-1}}.
\end{align}
Recalling that the gauge generators $\lbrace G_l\rbrace_l$ map between different manifestations $\tilde E$ of the same overall error $E$, and $\lbrace \tilde S_k\rbrace_k$ between stabiliser-equivalent errors, we have that
\begin{align}
e^{-\beta H_{\tilde E}(\vec{0},\vec{0})} &= \Pr(\tilde E),\\
\sum_{\vec g}e^{-\beta H_{\tilde E}(\vec{0},\vec{g})} &= \Pr(E),\\
Z_{\tilde E}:=\sum_{\vec{c},\vec{g}}e^{-\beta H_{\tilde E}(\vec{c},\vec{g})} &= \Pr(\overline E),
\end{align}
in analogy to \cref{thm:fundie}.

\end{document}